\def\dosth#1{\ifx###1##\else\dofirst#1\anytoken\fi}
\def\doagain#1\anytoken{\dosth{#1}}
\def\payoffpairs#1#2#3{\m=#1\multiply\m by 4 \advance\m by -1 \n=1
  \def\dofirst##1{\put(\n,-\m){\makebox(0,0){\strut##1}}\advance\n by 4 \doagain}%
  \dosth{#2\strut}%
  \m=#1\multiply\m by 4 \advance\m by -3 \n=3 \dosth{#3\strut}}
\def\singlepayoffs#1#2{\m=#1\multiply\m by 4 \advance\m by -2 \n=2
  \def\dofirst##1{\put(\n,-\m){\makebox(0,0){\strut##1}}\advance\n by 4 \doagain}%
  {\large\dosth{#2\strut}}}
\newcommand{\bimatrixgame}[8]{%
\setlength{\unitlength}{#1}%
\newcount\rows
\newcount\cols
\rows=#2
\cols=#3
\newcount\rowcoord
\newcount\colcoord
\rowcoord=\rows
\colcoord=\cols
\multiply\rowcoord by 4
\multiply\colcoord by 4
\newcount\m
\newcount\n
\m=\rowcoord
\n=\colcoord
\advance\m by 2 
\advance\n by 2 
\begin{picture}(\n,\m)(-2,-\rowcoord)
\m=\rows
\n=\cols
\advance\m by 1
\advance\n by 1 
\thinlines
\multiput(0,0)(0,-4){\m}{\line(1,0){\colcoord}}
\multiput(0,0)(4,0){\n}{\line(0,-1){\rowcoord}}
\put(0,0){\line(-1,1){2}}
\put(-1.5,0.5){\makebox(0,0)[r]{#4}}  
\put(-.7,1.7){\makebox(0,0)[l]{#5}}   
\n=2
\def\dofirst##1{\put(-0.8,-\n){\makebox(0,0)[r]{\strut##1}}\advance\n by 4
   \doagain}
\dosth{#6\strut} 
\n=2
\def\dofirst##1{\put(\n,1.0){\makebox(0,0){\strut##1}}\advance\n by 4
   \doagain}
\dosth{#7\strut}#8%
\end{picture}}
\theoremstyle{definition}
\theoremstyle{plain}
\newtheorem{theorem}{Theorem}[section]
\newtheorem{lemma}[theorem]{Lemma}
\newtheorem{corollary}[theorem]{Corollary}
\theoremstyle{remark}
\Crefname{claim}{Claim}{Claims}
\def\fp/{\textup{\textsf{FP}}}
\def\p/{\textup{\textsf{P}}}
\def\np/{\textup{\textsf{NP}}}
\def\conp/{\textup{\textsf{co-NP}}}
\def\fnp/{\textup{\textsf{FNP}}}
\def\tfnp/{\textup{\textsf{TFNP}}}
\def\ptfnp/{\textup{\textsf{PTFNP}}}
\def\ppa/{\textup{\textsf{PPA}}}
\def\ppad/{\textup{\textsf{PPAD}}}
\def\ppads/{\textup{\textsf{PPADS}}}
\def\ppp/{\textup{\textsf{PPP}}}
\def\pwpp/{\textup{\textsf{PWPP}}}
\def\pls/{\textup{\textsf{PLS}}}
\def\cls/{\textup{\textsf{CLS}}}
\def\ppadpls/{\textup{$\textsf{PPAD} \cap \textsf{PLS}$}}
\def\ppapls/{\textup{$\textsf{PPA} \cap \textsf{PLS}$}}
\def\eopl/{\textup{\textsf{EOPL}}}
\def\sopl/{\textup{\textsf{SOPL}}}
\def\ueopl/{\textup{\textsf{UEOPL}}}
\def\fixp/{\textup{\textsf{FIXP}}}
\def\bu/{\textup{\textsf{BU}}}
\def\bbu/{\textup{\textsf{BBU}}}
\def\linearfixp/{\textup{\textsf{Linear-FIXP}}}
\def\pspace/{\textup{\textsf{PSPACE}}}
\def\pcircuit{\textup{\textsc{Pure-Circuit}}\xspace}
\newcommand{\garbo}{\ensuremath{\bot}\xspace}
\newcommand{\innei}[1]{\ensuremath{N^{-}(#1)}\xspace}
\newcommand{\outnei}[1]{\ensuremath{N^{+}(#1)}\xspace}
\newcommand{\nei}[1]{\ensuremath{N(#1)}\xspace}
\newcommand{\val}[1]{\boldsymbol{\mathrm{x}}[#1]}
\newcommand{\valonly}{\boldsymbol{\mathrm{x}}}
\newcommand{\PURE}{\textup{\textsf{PURIFY}}\xspace}
\newcommand{\NOT}{\textup{\textsf{NOT}}\xspace}
\newcommand{\AND}{\textup{\textsf{AND}}\xspace}
\newcommand{\supp}{\textup{\textsf{supp}}\xspace}
\newcommand{\wn}[1]{\ensuremath{#1}-WSNE\xspace}
\newcommand{\zero}{\textup{\textsf{zero}}\xspace}
\newcommand{\one}{\textup{\textsf{one}}\xspace}
\newcommand{\eps}{\ensuremath{\varepsilon}\xspace}
\newcommand{\vba}{\ensuremath{\vb{a}}\xspace}
\newcommand{\vbai}{\ensuremath{\vb{a}_{-i}}\xspace}
\newcommand{\vbs}{\ensuremath{\vb{s}}\xspace}
\newcommand{\vbsi}{\ensuremath{\vb{s}_{-i}}\xspace}
\newcommand{\br}{\ensuremath{\text{br}}\xspace}
\title{Tight Inapproximability for Graphical Games}
\author{
\begin{tabular}{cc}
& \\
\textbf{Argyrios Deligkas} & \textbf{John Fearnley}\\
\small{Royal Holloway, United Kingdom} & \small{University of Liverpool, United Kingdom}\\
\href{mailto:argyrios.deligkas@rhul.ac.uk}{\small{\texttt{argyrios.deligkas@rhul.ac.uk}}} & \href{mailto:john.fearnley@liverpool.ac.uk}{\small{\texttt{john.fearnley@liverpool.ac.uk}}}\\
& \\
\textbf{Alexandros Hollender} & \textbf{Themistoklis Melissourgos}\\
\small{University of Oxford, United Kingdom} & \small{University of Essex, United Kingdom}\\
\href{mailto:alexandros.hollender@cs.ox.ac.uk}{\small{\texttt{alexandros.hollender@cs.ox.ac.uk}}} & \href{mailto:themistoklis.melissourgos@essex.ac.uk}{\small{\texttt{themistoklis.melissourgos@essex.ac.uk}}}\\
& \\
\end{tabular}
}
\date{}
\begin{document}

\maketitle
\thispagestyle{empty}

\begin{abstract}
	We provide a complete characterization for the computational complexity of finding approximate equilibria in two-action graphical games. We consider the two most well-studied approximation notions: $\eps$-Nash equilibria ($\eps$-NE) and $\eps$-well-supported Nash equilibria ($\eps$-WSNE), where $\eps \in [0,1]$. We prove that computing an $\eps$-NE is \ppad/-complete for any constant $\eps < 1/2$, while a very simple algorithm (namely, letting all players mix uniformly between their two actions) yields a $1/2$-NE. On the other hand, we show that computing an $\eps$-WSNE is \ppad/-complete for any constant $\eps < 1$, while a $1$-WSNE is trivial to achieve, because any strategy profile is a $1$-WSNE. All of our lower bounds immediately also apply to graphical games with more than two actions per player.
\end{abstract}

\section{Introduction}

Graphical games were introduced more than twenty years ago
by Kearns, Littman, and Singh~\cite{KearnsLS01-graphical-games} as a succinct model of a multi-player game.
These games have found a wide variety of applications. On the theoretical
side, they have served as a fundamental tool for showing seminal \ppad/-completeness
results in algorithmic game theory~\cite{DaskalakisGP09-Nash,ChenDT09-Nash}. Practically,
graphical games have been used as a foundation for the game theoretic analysis of
networks~\cite{galeotti2010network,jackson2015games}, social networks, and
multi-agent systems~\cite{Kearns07-AGT-graphical-games,jackson2011overview}. 

A graphical game is specified by a directed graph with $n$ vertices. Each vertex 
represents a player, and each player has $m$ distinct actions.
The edges of the graph specify the interactions between the players: the payoff
to player $i$ is determined entirely by the actions chosen by player $i$ and
the in-neighbours of player $i$.
Formally, the payoffs for a player are given
by a payoff tensor, which maps the actions chosen by that player and their
in-neighbours to a payoff in $[0, 1]$. 

Graphical games are more succinct than standard normal form games when the
maximum in-degree $d$ is constant.
An $n$-player $m$-action
game requires $n \cdot m^n$ payoffs to be written down, which
becomes infeasibly large as $n$ grows. On the other hand, 
each tensor in a graphical game has $m^{d+1}$ payoff entries, giving $n
\cdot m^{d+1}$ payoffs in total, which provides much more reasonable scaling as
$n$ grows when $d$ is constant.

\paragraph{\bf The complexity of finding equilibria.}

Unfortunately it is known that finding a Nash equilibrium in a graphical
game is a \ppad/-hard problem~\cite{DaskalakisGP09-Nash} and thus considered to be intractable. 
This has left open
the question of finding \emph{approximate} Nash equilibria, and two notions of
approximate equilibrium have been studied in the literature. An
\emph{$\eps$-Nash equilibrium} (\eps-NE) requires that no player can improve
their payoff by more than $\eps$ by unilaterally changing their strategy, while
an \emph{$\eps$-well-supported Nash equilibrium} ($\eps$-WSNE) requires that
all players only place positive probability on actions that are $\eps$-best
responses.
Every \eps-WSNE is also an \eps-NE, but the reverse is not true. 

In this paper we make the standard assumption that all payoffs lie in the range
$[0, 1]$, which then gives us a scale on which we can measure the additive approximation factor $\eps$. A $0$-NE
or $0$-WSNE is an exact Nash equilibrium, while a $1$-NE or $1$-WSNE can be
trivially obtained, since the requirements will be satisfied no matter what
strategies the players use.

For many years, the best known lower bounds for approximate equilibria in
graphical games were given by Rubinstein~\cite{Rubinstein18-Nash-inapproximability}, who proved that there is
some unspecified small constant $\eps$ for which finding an $\eps$-NE is
\ppad/-complete, and there is a different but still unknown small
constant $\eps'$ for which finding an $\eps'$-WSNE is \ppad/-complete.
In fact, Rubinstein's result applies to games that are simultaneously graphical games of constant degree and also \emph{polymatrix games}, namely in which each edge represents a two-player game and a player's payoff is the sum of payoffs from these games against her in-neighbours.

This was recently improved by a result of Deligkas et al.~\cite{DFHM22}. They
showed that it is \ppad/-complete to find a $1/48$-NE of a two-action
polymatrix game, and it is \ppad/-complete to find an $\eps$-WSNE of a
two-action polymatrix game for all $\eps < 1/3$ (the latter result being tight). Since these hardness results hold even for constant-degree polymatrix games, they also apply to graphical
games.\footnote{A polymatrix game of constant degree can be turned into its graphical game representation in polynomial time.}

On the other hand, only trivial upper bounds are known for approximate equilibria in graphical games, even when the players only have two actions.
For \eps-WSNE the upper bound is $1$, since any strategy profile is a 1-WSNE.
For \eps-NE, the upper bound is $1/2$ in two-action games and is achieved when all players uniformly mix over their two actions; the upper bound simply follows from the fact that every player plays their best response with probability at least $0.5$ and that the maximum payoff is bounded by 1.

\paragraph{\bf Our Contribution.}
In this paper we show that the aforementioned trivial upper bounds are in fact the {\em best possible}, by providing matching lower bounds for finding approximate equilibria in graphical games.

For the problem of finding an $\eps$-WSNE, we show that it is \ppad/-complete to
find an $\eps$-WSNE in a two-action graphical game for every constant $\eps < 1$. Since finding a $1$-WSNE is trivial, we
obtain a striking characterization for constant $\eps$: no polynomial-time
algorithm can find a non-trivial WSNE of a graphical game unless $\ppad/ = \p/$.

In fact, we present a more fine-grained analysis that provides a complete dichotomy of the complexity of finding a WSNE in a two-action graphical game of maximum in-degree $d$:
\begin{itemize}
	\item  a $\left(1 - \frac{2}{2^d + 1}\right)$-WSNE can be found in polynomial time;
	\item  for any $\eps < 1 - \frac{2}{2^d + 1}$, it is \ppad/-complete to compute a 
	$\eps$-WSNE.
\end{itemize}
Thus, for any constant $\eps < 1$ there exists a sufficiently large constant in-degree $d$, such that the problem becomes intractable.

For $\eps$-NE we show that it is \ppad/-complete to find an $\eps$-NE of a
two-action graphical game for any constant $\eps < 0.5$; this complements the
straightforward algorithm for finding a $0.5$-NE in a two-action graphical
game. We note that our lower bounds, both for $\eps$-WSNE and $\eps$-NE, also hold for graphical games with more than two actions.\footnote{We can simply add additional ``dummy'' actions that are just copies of the original two actions.}

All of our lower bounds are shown via reductions from the \pcircuit problem
that was recently introduced 
by Deligkas et al.~\cite{DFHM22}. In that paper, \pcircuit was used to show the aforementioned lower
bounds for polymatrix games. We
show that \pcircuit can likewise be used to show stronger and tight lower bounds for
graphical games.

\paragraph{\bf Further Related Work.}
The class \ppad/ was defined by Papadimitriou~\cite{Papadimitriou94-TFNP-subclasses}. 
Many years later, Daskalakis, Goldberg, and Papadimitriou~\cite{DaskalakisGP09-Nash} proved that finding an \eps-NE in graphical games and 3-player normal form games is \ppad/-complete for an exponentially small \eps. These results were further extended to polynomially small \eps for 2-player games and two-action polymatrix games with bipartite underlying graph by Chen, Deng, and Teng~\cite{ChenDT09-Nash}. 

On the positive side, Elkind, Goldberg, and Goldberg~\cite{EGG} derived a polynomial-time algorithm on two-action graphical games on paths and cycles, and Ortiz and Irfan~\cite{OrtizIrfan-hypergraphical} derived an approximation scheme for constant-action graphical games on trees. 
For polymatrix games, Barman, Ligett, and Piliouras~\cite{BLP-polymatrix-trees} derived a quasi-PTAS on trees, and Deligkas, Fearnley, and Savani~\cite{DFS-treewidth} derived a quasi-PTAS for constant-action games on bounded treewidth graphs. 
These results where complemented by the same authors~\cite{DFS20-ptph} who showed that finding an {\em exact} NE is \ppad/-complete for polymatrix games on trees when every player has 20 actions.

\section{Preliminaries}
For every natural number $k$, let $\Delta^k$ denote the $k$-dimensional simplex, i.e., $\Delta^k := \{x \in \mathbb{R}^{k+1} \; : \; x \ge 0,\; \sum_{i=1}^{k+1} x_i = 1\}$, and let $[k] := \{1, 2, \ldots, k\}$.

\subsection{Graphical Games}

An $n$-player $m$-action graphical game is defined by a directed graph $G = (V,E)$, where
$|V|=n$, each node of $G$ corresponds to a player, and the maximum number of actions per player is $m$.
We define the in-neigbourhood of node $i$ to be $\innei{i} := \{ j \in V  :  (j,i) \in E \}$, and similarly its out-neighbourhood to be $\outnei{i} := \{ j \in V : (i,j) \in E \}$. We also define the neighbourhood of $i$ as $\nei{i} := \innei{i} \cup \outnei{i} \cup \{i\}$. We include $i$ in its own neighbourhood for notational convenience. 

In a graphical game, each player $i$ participates in a normal form game $G_i$ whose player set is $\nei{i}$, but she affects only the payoffs of her out-neighbours. Player $i$ has $m_i$ \emph{actions}, or \emph{pure strategies}, and her payoffs are represented by a function $R_{i} : [m_i] \times \prod_{j \in \innei{i}} [m_j] \mapsto [0,1]$ which will be referred to as the \emph{payoff tensor} of $i$. If the codomain of $R_i$ is $\{0,1\}$ for all $i \in V$, we have a {\em win-lose} graphical game.

A \emph{mixed strategy} $s_i$ for player $i$ specifies a probability distribution over
player $i$'s actions. Thus, the set of mixed strategies for player $i$ corresponds to the $(m_i-1)$-dimensional simplex $\Delta^{m_i-1}$. 
The \emph{support}
of a mixed strategy $s_i = (s_{i}(1), s_{i}(2), \dots, s_{i}(m_i)) \in \Delta^{m_i-1}$ is given by $\supp(s_i) = \{j \in [m_i] \; : \; s_i(j) > 0\}$. In other words, it is the set of pure strategies that are played with non-zero probability in strategy $s_i$.

An action profile $\vba(H) := (a_i)_{i \in H}$ over a player set $H$ is a tuple of actions, one for each player in $H$, and so the set of these action profiles is given by $A(H) = \prod_{i \in H} [m_i]$.
Similarly, a \emph{strategy profile} $\vbs(H)$ over the same set is a tuple of mixed strategies, and so the set of these strategy profiles is given by $\prod_{i \in H} \Delta^{m_i-1}$.
We define the \emph{partial action profile} \vbai to be the tuple of all players' actions except $i$'s action, and similarly we define the \emph{partial strategy profile} \vbsi. 
The expected payoff of $i$ when she plays action $k \in [m_i]$, and all
other players play according to $\vbsi$ is 
\begin{equation*}
	u_i(k, \vbsi) := \sum_{\vba \in A(\innei{i})} R_{i}(k; \vba) \cdot \prod_{{j \in \innei{i}}} s_j(a_j).
\end{equation*}
Notice that this depends only on the in-neighbours of $i$.

The expected payoff to player $i$ under $\vbs$ is therefore
\begin{equation*}
	u_i(\vbs) := \sum_{k \in [m_i]} u_i(k, \vbsi) \cdot s_i(k).
\end{equation*}

A pure strategy $k$ is a \emph{best response} for player $i$ against a partial strategy profile \vbsi
if it achieves the maximum payoff over all her pure strategies, that is,
\begin{equation*}
	u_i(k, \vbsi) = \max_{\ell \in [m_i]} u_i(\ell, \vbsi).
\end{equation*}
Pure strategy $k$ is an \emph{$\eps$-best response} if the payoff it yields is within $\eps$ of a
best response, meaning that 
\begin{equation*}
	u_i(k, \vbsi) \ge \max_{\ell \in [m_i]} u_i(\ell, \vbsi) - \eps.
\end{equation*}
Finally, the \emph{best response payoff} for player $i$ is 
\begin{equation*}
	\br_i(\vbs_{-i}) := \max_{\ell \in [m_i]} u_i(\ell, \vbsi).
\end{equation*}

\begin{figure*}[t!]
	\centering
	\begin{minipage}{0.27\textwidth}
		\begin{center}
			\begin{tabular}{c||c}
				$u$ & $v$ \\ \hline
				0 & 1 \\
				1 & 0 \\
				$\garbo$ & $\{0, 1, \garbo\}$\\
				\multicolumn{2}{c}{}
			\end{tabular}
			\caption*{\NOT gate}
		\end{center}
	\end{minipage}
	\begin{minipage}{0.35\textwidth}
		\begin{center}
			\begin{tabular}{c|c||c}
				$u$ & $v$ & $w$ \\ \hline
				1 & 1 & 1 \\
				0 & $\{0, 1, \garbo\}$ & 0 \\
				$\{0, 1, \garbo\}$ & 0 & 0 \\
				\multicolumn{2}{c||}{Else} & $\{0, 1, \garbo\}$
			\end{tabular}
			\caption*{\AND gate}
		\end{center}
	\end{minipage}
	\begin{minipage}{0.35\textwidth}
		\begin{center}
			\begin{tabular}{c||c|c}
				$u$ & \phantom{xx}$v$\phantom{xx}  & $w$ \\ \hline
				$0$ & $0$ & $0$ \\
				$1$ & $1$ & $1$ \\
				\multirow{2}{*}{$\garbo$} & \multicolumn{2}{c}{At least one} \\
				& \multicolumn{2}{c}{output in  $\{0, 1\}$}
			\end{tabular}
			\caption*{\PURE gate}
		\end{center}
	\end{minipage}
	\caption{The truth tables of the three gates of \pcircuit.}
	\label{fig:gates}
\end{figure*}

\paragraph{\bf (Approximate) Nash equilibria.}
A strategy profile \vbs is a \emph{Nash equilibrium} if $\br_i(\vbs) = u_i(\vbs)$ for all players $i$, i.e., every player achieves their best response payoff.
A strategy profile \vbs is an \emph{$\eps$-Nash equilibrium} (\eps-NE) if every
player's payoff is within $\eps$ of their best response payoff, meaning that
$u_i(\vbs) \ge \br_i(\vbs) - \eps$. 
A strategy profile \vbs is an \emph{$\eps$-well supported Nash equilibrium}
(\eps-WSNE) if every
player only plays strategies that are $\eps$-best responses, meaning that
for all $i$ we have that $\supp(s_i)$ contains only 
$\eps$-best response strategies.

\subsection{The \pcircuit Problem}

An instance of the \pcircuit problem is given by a node set $V=[n]$ and a set $G$ of gate-constraints (or just \emph{gates}). Each gate $g \in G$ is of the form $g = (T,u,v,w)$ where $u,v,w \in V$ are distinct nodes, and $T \in \{\NOT, \AND, \PURE\}$ is the type of the gate, with the following interpretation.
\begin{itemize}
	\item If $T=\NOT$, then $u$ is the input of the gate, and $v$ is its output. ($w$ is unused)
	\item If $T=\AND$, then $u$ and $v$ are the inputs of the gate, and $w$ is its output.
	\item If $T=\PURE$, then $u$ is the input of the gate, and $v$ and $w$ are its outputs.
\end{itemize}
We require that each node is the output of exactly one gate.

A solution to instance $(V,G)$ is an assignment $\valonly: V \to \{0,1,\garbo\}$ that satisfies all the gates (see Fig.~\ref{fig:gates}), i.e., for each gate $g=(T,u,v,w) \in G$ we have the following. 
\begin{itemize}
	\item If $T=\NOT$ in $g=(T,u,v)$, then $\valonly$ satisfies
	\begin{align*}
		&\val{u} = 0 \implies \val{v} = 1\\
		&\val{u} = 1 \implies \val{v} = 0.
	\end{align*}
	\item If $T=\AND$ in $g = (T,u,v,w)$, then $\valonly$ satisfies
	\begin{align*}
		\val{u} = \val{v} = 1 & \implies \val{w} = 1\\
		x[u]=0 \lor x[v]=0 &\implies x[w]=0.
	\end{align*}
	\item If $T=\PURE$, then $\valonly$ satisfies
	\begin{align*}
		& \{\val{v}, \val{w}\} \cap \{0,1\} \neq \emptyset\\
		& \val{u} \in \{0,1\} \implies \val{v} = \val{w} = \val{u}.
	\end{align*}
\end{itemize}
The structure of a \pcircuit instance is captured by its \emph{interaction graph}. This graph is constructed on the vertex set $V = [n]$ by adding a directed edge from node $u$ to node $v$ whenever $v$ is the output of a gate with input $u$. The total degree of a node is the sum of its in- and out-degrees.

\begin{theorem}[\cite{DFHM22}]
	\label{thm:pancircuit}
	\pcircuit is \ppad/-complete, even when every node of the interaction graph has in-degree at most 2 and total degree at most 3.
\end{theorem}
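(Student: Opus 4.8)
The plan is to establish membership in \ppad/ and \ppad/-hardness separately, and then to refine the hardness construction so that it respects the stated degree bounds.

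\emph{Membership.}
For membership I would reduce \pcircuit to the computation of an approximate Brouwer fixed point. Given an instance $(V,G)$ with $|V|=n$, work on the cube $[0,1]^n$ and \emph{decode} a coordinate $x_v$ to $0$ if $x_v\le 1/4$, to $1$ if $x_v\ge 3/4$, and to \garbo otherwise. Define a continuous map $F:[0,1]^n\to[0,1]^n$ by letting the $v$-th coordinate (where $v$ is the unique gate-output of its gate $g$) be a target depending continuously on the current values of the inputs of $g$: for a \NOT gate with input $u$ the target is $1-x_u$; for an \AND gate with inputs $u,v'$ the target is $\min(x_u,x_{v'})$; for a \PURE gate with input $u$ and outputs $v,w$ the two targets are $r_{1/3}(x_u)$ and $r_{2/3}(x_u)$, where $r_c$ is a fixed steep continuous ramp equal to $0$ on $[0,c-\delta]$, to $1$ on $[c+\delta,1]$, and affine in between, for a small constant $\delta<1/6$. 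Since $F$ is Lipschitz and polynomial-time computable, computing a sufficiently precise approximate fixed point is in \ppad/~\cite{Papadimitriou94-TFNP-subclasses}. A short case analysis shows that the decoding of such a point satisfies every gate: whenever a gate forces an output bit, the target is pinned to that bit (for \PURE this uses that $1/3,2/3$ lie well inside the flat parts of the ramps, and that $x_u\le 1/4$ or $x_u\ge 3/4$ forces both ramps to the same bit); whenever an input decodes to \garbo the gate permits an arbitrary output; and since the sloped zones of $r_{1/3}$ and $r_{2/3}$ are disjoint, at least one \PURE output always lands on $\{0,1\}$.

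\emph{Hardness.}
I would reduce from a canonical line-following \ppad/-complete problem such as \eol. Given an \eol instance on $2^\ell$ vertices with successor and predecessor circuits $S,P$, build a \pcircuit whose nodes store the $\ell$ bits of a ``current vertex'' together with internal wires evaluating $S$, $P$ and the endpoint test, all routed back into a feedback loop so that a \garbo-free gate-satisfying assignment is possible precisely at an endpoint of the line. Boolean logic is realized with \NOT and \AND gates only (an \OR is expressible from these by De Morgan, faithfully on pure inputs; the Boolean circuits $S,P$ may even be taken constant-free), and the fan-out needed to route a computed bit into several gates is realized with \PURE gates, which copy a pure bit faithfully to both outputs. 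Two features of this gate set drive the argument: first, \pcircuit has no gadget pinning a node to a hard-wired constant, so every would-be constant is just a wire that happens to be pure at honest assignments; second, and crucially, a \PURE gate always emits at least one genuine bit even on a \garbo input, so \garbo cannot propagate unobstructed — this rules out the degenerate ``everything is \garbo'' assignment and forces any gate-satisfying assignment to commit to enough bits that, after decoding, it exhibits an actual endpoint of the line. Proving this ``every solution decodes back'' property, for clean endpoint-encoding assignments and for \garbo-laden ones alike, is the technical heart.

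\emph{Degree bounds.}
Since only the \AND gate has two inputs and each node is the output of exactly one gate, every node has in-degree $1$ or $2$, so the in-degree bound of $2$ is automatic; it remains to bound out-degree, hence total degree, by $3$. I would replace each node of large fan-out $k$ by a balanced binary tree of \PURE gates (first inserting a short buffer, a double \NOT, when the source node already has in-degree two, so as to free an out-slot): each \PURE gate turns one outgoing wire into two carrying copies, so a tree of depth $O(\log k)$ produces $k$ copies while every node keeps total degree at most $3$, and the instance stays polynomial. I expect this step, rather than the base reduction, to be the main obstacle. A \PURE tree copies a \emph{pure} value perfectly, but on a \garbo input it only guarantees one genuine-bit output per gate, so the leaf ``copies'' may disagree, and a solution of the low-degree instance does not obviously pull back to a solution of the high-degree one. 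Overcoming this forces one to design the hardness reduction so that nodes requiring fan-out can always be taken pure, or so that downstream gadgets are insensitive to how a corrupted \garbo-copy collapses — and then to verify that solutions are preserved in both directions. Managing this interaction between \garbo and fan-out is precisely what makes the tight-degree version delicate.
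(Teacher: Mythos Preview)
This theorem is not proved in the present paper; it is quoted from~\cite{DFHM22} as a black box, so there is no in-paper argument to compare your proposal against.

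On its own merits: your membership sketch is essentially correct. Embedding the gates as continuous operators on $[0,1]^n$, with the two \PURE outputs driven by ramps at disjoint thresholds $1/3$ and $2/3$, and then invoking \ppad/-membership of approximate Brouwer fixed points, works once one checks the fixed-point slack against the $1/4$--$3/4$ decoding margins, which is routine.

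The hardness direction, however, is a problem outline rather than a proof. You correctly identify the two real obstacles---(i) showing that \emph{every} gate-satisfying assignment, including \garbo-laden ones, decodes to an \eol endpoint, and (ii) handling fan-out via \PURE trees whose leaves may disagree on a \garbo input---and you explicitly leave both unresolved, calling the first ``the technical heart'' and the second ``the main obstacle.'' These are precisely the places where the construction in~\cite{DFHM22} does substantial work: the naive feedback-loop encoding of \eol you describe does not by itself prevent a stray \garbo from making the circuit satisfiable at a non-endpoint, and the single-pure-output guarantee of \PURE is too weak to fix this without a carefully engineered circuit in which \garbo values are systematically squeezed out or rendered harmless downstream. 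So your proposal is an accurate map of the difficulties, but it is not a proof that surmounts them.
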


\section{Well-Supported Nash Equilibria}

\paragraph{An easy upper bound.}

We present a polynomial time algorithm to compute a \wn{\left(1 - \frac{2}{2^d + 1}\right)} in any two-action graphical game with maximum in-degree $d \geq 2$. The algorithm relies on a simple and natural approach that has been used for similar problems by Liu et al.~\cite{LiuLD21-sparse-winlose-polymatrix} and Deligkas et al.~\cite{DFHM22}.

The algorithm proceeds in two steps. In the first step, it iteratively checks for a player that has an action that is {\em \eps-dominant}; an action which, if played with probability 1, will satisfy the \wn{\eps} conditions no matter what strategies the in-neighbours play.
Here, we will set $\eps =  1 - \frac{2}{2^d + 1}$, where $d$ is the maximum in-degree of the graph.
If such a player with an \eps-dominant action exists, the algorithm fixes the strategy of that player, updates the game accordingly, and iterates until there is no such player left.
In the second step, the algorithm lets all remaining players mix uniformly, i.e., every player without an \eps-dominant action plays each of its two actions with probability $1/2$.

\begin{theorem}
	\label{thm:graphical-wsne-UB}
	There is a polynomial-time algorithm that finds a \wn{\left(1 - \frac{2}{2^d + 1}\right)} in a two-action graphical game with maximum in-degree $d$.
\end{theorem}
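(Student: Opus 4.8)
The plan is to verify that the two-step algorithm described above always terminates with a valid $\eps$-WSNE for $\eps = 1 - \tfrac{2}{2^d+1}$. The analysis splits cleanly into two independent pieces: showing that fixing an $\eps$-dominant action is "safe" (it can never turn a future valid configuration invalid, and the detection is polynomial), and showing that when no $\eps$-dominant action remains, uniform mixing by all surviving players satisfies the $\eps$-WSNE conditions for those players. The first piece is essentially bookkeeping: an action that is $\eps$-dominant remains $\eps$-dominant regardless of how in-neighbours play, so once we commit to it the constraint for that player is permanently discharged; moreover, since each player has only two actions and in-degree at most $d$, checking whether action $k$ is $\eps$-dominant amounts to comparing $\min_{\vba} u_i(k,\vba)$ against $\max_{\vba} u_i(1-k,\vba)$ over the at most $2^d$ (now possibly fewer, as some in-neighbours are fixed) pure in-profiles — polynomial time. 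After fixing a player, we substitute its now-pure strategy into the payoff tensors of its out-neighbours, shrinking their effective in-degree, and iterate; the loop runs at most $n$ times.

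The heart of the argument is the second step. Fix a surviving player $i$ with (remaining) in-degree $d' \le d$, none of whose actions is $\eps$-dominant, and suppose every in-neighbour of $i$ plays uniformly (or is already fixed — either way each in-neighbour plays some product distribution). I want to show both of $i$'s actions are $\eps$-best responses against this profile, i.e. $|u_i(0,\vbsi) - u_i(1,\vbsi)| \le \eps$. The key quantitative observation is: since neither action is $\eps$-dominant, for each action $k$ there is a pure in-profile $\vba^{(k)}$ witnessing failure, namely $u_i(k,\vba^{(k)}) < \max_{\vba} u_i(1-k,\vba) - \eps$, so roughly $u_i(k, \vba^{(k)})$ is small and $u_i(1-k, \cdot)$ can be as large as $1$ somewhere. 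The point is to convert these pointwise bounds into a bound on the \emph{expected} payoffs under the uniform (product) distribution: each pure in-profile gets probability at least $2^{-d'} \ge 2^{-d}$, so if $u_i(k,\vba^{(k)})$ is small then $u_i(k,\vbsi) \le 2^{-d} \cdot u_i(k,\vba^{(k)}) + (1 - 2^{-d}) \cdot 1$, and symmetrically. Plugging in that "$u_i(k,\vba^{(k)})$ small" means $< 1 - \eps$ (using that payoffs are in $[0,1]$ and $\max u_i(1-k,\cdot) \le 1$), and that the \emph{best-response payoff} is correspondingly controlled, one arrives at $\br_i(\vbsi) - u_i(k,\vbsi) \le \eps$; the specific constant $\eps = 1 - \tfrac{2}{2^d+1}$ is exactly what makes the arithmetic close, which is why the algorithm uses that value. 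I would carry out this estimate carefully, tracking the two cases $k=0$ and $k=1$ and taking the worse one.

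I expect the main obstacle to be getting the constant in the second step to come out to precisely $1 - \tfrac{2}{2^d+1}$ rather than something slightly weaker like $1 - 2^{-d}$. The naive bound "each profile has weight $\ge 2^{-d}$" gives $\eps \le 1 - 2^{-d}$, which is not tight; the improvement to $1 - \tfrac{2}{2^d+1}$ must come from using the failure of $\eps$-dominance for \emph{both} actions simultaneously and balancing — intuitively, if $u_i(0,\cdot)$ has a very low point then $u_i(1,\cdot)$ must have a point that is at least $\eps$ above it, and vice versa, and these two constraints together pin down how far apart the two expected payoffs can be more tightly than either alone. Working out that balance (likely a short optimization: choose the adversarial payoff tensor that maximizes the expected-payoff gap subject to neither action being $\eps$-dominant, and check the optimum equals $\eps$) is the one place where a genuine (if small) calculation is unavoidable. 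Everything else — termination, the update step preserving graphicality and in-degree bound, the definition-chasing for $\eps$-best response — is routine.
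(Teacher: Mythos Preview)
Your overall plan matches the paper's: verify that fixing an $\eps$-dominant action is safe and cheap, then show that once no such action remains, uniform mixing makes every surviving player's two actions $\eps$-close in expected payoff. That is exactly the paper's decomposition, and your handling of the first step is fine.

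There is, however, a genuine slip in how you unpack $\eps$-dominance, and it is precisely this slip that makes the tight constant look like it needs an ``optimization.'' You write that checking $\eps$-dominance of action $k$ amounts to comparing $\min_{\vba} u_i(k,\vba)$ with $\max_{\vba} u_i(1-k,\vba)$. That is a strictly stronger condition than the one actually defined (``playing $k$ satisfies the $\eps$-WSNE constraint against \emph{every} in-neighbour profile''). By multilinearity, the correct characterization is
\[
\text{$k$ is $\eps$-dominant} \iff \forall\, \vba:\; u_i(k,\vba) \ge u_i(1-k,\vba) - \eps .
\]
Consequently, \emph{non}-dominance of $k$ gives you a single witness profile $\vba^{(k)}$ at which the \emph{same-profile} gap $u_i(1-k,\vba^{(k)}) - u_i(k,\vba^{(k)})$ exceeds $\eps$; this is strictly more information than your weaker witness $u_i(k,\vba^{(k)}) < \max_{\vba} u_i(1-k,\vba) - \eps$.

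With the correct witnesses in hand, the ``short optimization'' you anticipate evaporates. Set $f_i(\vba) := R_i(0;\vba) - R_i(1;\vba) \in [-1,1]$ and $M := \sum_{\vba} f_i(\vba)$; the goal is $|M|/2^{k} \le \eps$ where $k\le d$ is the remaining in-degree. Non-dominance of action $1$ gives some $\vba$ with $f_i(\vba) > \eps$, and non-dominance of action $0$ gives some $\vba'$ with $f_i(\vba') < -\eps$. Since $M$ is a sum of $2^{k}$ terms in $[-1,1]$ with one term below $-\eps$ (respectively one term above $\eps$), we get $|M| \le 2^{k} - 1 - \eps$, hence
\[
\frac{|M|}{2^{k}} \;\le\; 1 - \frac{1+\eps}{2^{k}} \;\le\; 1 - \frac{1+\eps}{2^{d}} \;=\; \eps,
\]
the last equality being exactly the defining equation of $\eps = 1 - \tfrac{2}{2^d+1}$. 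This is the paper's argument in full; no adversarial-tensor optimization is needed, and the ``balancing'' you correctly intuit is nothing more than using the two single-profile witnesses on the difference function $f_i$ rather than bounding $u_i(0,\cdot)$ and $u_i(1,\cdot)$ separately.
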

\begin{proof}
	It is easy to see that the algorithm described above runs in polynomial time. In particular, we can check if a player has an \eps-dominant action by simply going over all possible action profiles of its in-neighbours. We will prove it computes an \wn{\eps} for $\eps = 1 - \frac{2}{2^d + 1}$.
	By definition of \eps-dominance, the players whose actions were fixed in the first step satisfy the constraints of \eps-WSNE.
	Notice that after fixing any such player to play an \eps-dominant action, we get a smaller graphical game. Thus, it suffices to consider the graphical game we obtain after the end of the first step, and to show that if all (remaining) players mix uniformly, this is an \wn{\eps}.
	
	So, consider a player $i$ in the graphical game we obtain after the end of the first step, and denote its in-degree by $k := |\innei{i}| \leq d$. Since all in-neighbours of player $i$ are mixing uniformly, the expected payoff of player $i$ for playing action $0$ is $\sum_{\vba \in A(\innei{i})} R_{i}(0 ; \vba)/2^k$, and for playing action $1$, it is $\sum_{\vba \in A(\innei{i})} R_{i}(1 ; \vba)/2^k$. The constraints of \wn{\eps} for player $i$ are satisfied if both actions are $\eps$-best responses, i.e., if
	\begin{align*}
	    \left|\sum_{\vba \in A(\innei{i})} R_{i}(0 ; \vba)/2^k - \sum_{\vba \in A(\innei{i})} R_{i}(1 ; \vba)/2^k\right| \leq \eps.
	\end{align*}
	This can be rewritten as $ \frac{1}{2^k} \left| \sum_{\vba \in A(\innei{i})} f_{i}(\vba) \right| \leq \eps $, where, for any action profile $\vba \in A(\innei{i})$, we let
	\begin{align*}
		f_{i}(\vba) := R_{i}(0 ; \vba) - R_{i}(1 ; \vba).
	\end{align*}
	Note that since all payoffs lie in $[0,1]$, we always have $f_{i}(\vba) \in [-1,1]$.
	
	Let $M := \sum_{\vba \in A(\innei{i})} f_{i}(\vba)$. To prove the correctness of the algorithm, it suffices to prove that $\left| M \right| \leq 2^k - 1 - \eps$; since then 
	$ \frac{1}{2^k} \cdot |M| \leq 1- \frac{1+\eps}{2^k} \leq 1- \frac{1+\eps}{2^d} = \eps$. To see why this is indeed the case, observe the following. Since player $i$ does not have an \eps-dominant action (otherwise, it would have been removed in the first step), it means that there exist $\vba, \vba' \in A(\innei{i})$  such that 
	\begin{align}
		\label{eq:graphical-ub}
		f_{i}(\vba) > \eps \quad \text{and} \quad  f_{i}(\vba') < -\eps.
	\end{align}
	Given that $M$ is the sum of $2^k$ terms, each of them upper bounded by $1$, and at least one of them upper bounded by $-\eps$ by \eqref{eq:graphical-ub}, it follows that $M \leq 2^k-1 -\eps$. Similarly, since each term is also lower bounded by $-1$, and one of them is lower bounded by $\eps$, we also obtain that $M \geq -2^k-1 + \eps$. Thus, $\left| M \right| \leq 2^k - 1 - \eps$, as desired, and this completes the proof of correctness.
\end{proof}

\paragraph{\bf The lower bound.}
In this section we prove a matching lower bound for Theorem~\ref{thm:graphical-wsne-UB},
which essentially proves that computing an $\eps$-WSNE in two-action graphical games is \ppad/-complete for every constant $\eps \in (0,1)$.

We will prove our result by a reduction from \pcircuit. For the remainder of this section, we fix $\eps < 1 - \frac{2}{2^d+1}$.
Given a \pcircuit instance with in-degree 2, we build a two-action graphical game, where the two actions will be named $\zero$ and $\one$. For any given $d \geq 2$, the game will have in-degree at most $d$.
Each node $v$ of the \pcircuit instance will correspond to a player in the 
game -- the game will have some additional auxiliary players too -- whose strategy in any \eps-WSNE will encode a solution to the \pcircuit problem as follows. 
Given a strategy $s_v$ for the player that corresponds to node $v$, we define the assignment $\valonly$ for \pcircuit such that:
\begin{itemize}
	\item if $s_{v}(\zero) = 1$, then $\val{v} = 0$;
	\item if $s_{v}(\one) = 1$, then $\val{v} = 1$;
	\item otherwise, $\val{v} = \garbo$.
\end{itemize}

We now give implementations for \NOT, \AND, and \PURE gates. We note that, in
all three cases, the payoff received by player $v$ is only affected by the
actions chosen by the players representing the inputs to the (unique) gate $g$
that outputs to $v$. Thus, we can argue about the equilibrium condition at $v$
by only considering the players involved in gate $g$, and we can ignore all
other gates while doing this.

\paragraph{\NOT gates.}
For a gate $g = (\NOT, u, v)$ -- where recall that $u$ is the input variable and $v$ is the output variable -- we create a gadget involving players $u$ and $v$, where player $v$ has a unique incoming edge from $u$. The payoffs of $v$ are defined as follows.
\begin{itemize}
	\item If $u$ plays $\zero$, then $v$ gets payoff 0 from playing $\zero$ and payoff
	1 from playing  $\one$.
	\item If $u$ plays $\one$, then $v$ gets 1 from $\zero$ and 0 from $\one$.
\end{itemize}
This gadget appeared in~\cite{DFHM22}, but we include it here for completeness.
We claim that this gadget works correctly.
\begin{itemize}
	\item[-] If $s_{u}(\zero) = 1$, i.e., $u$ encodes 0, observe that for player $v$ action $\zero$ yields payoff 0, while action $\one$ yields payoff 1. Hence, by the constraints imposed by \eps-WSNE it must hold that	$s_v(\one)=1$, and thus $v$ encodes $1$.
	\item[-] Using identical reasoning, we can prove that if $s_{u}(\one) = 1$, then $s_{v}(\one) = 0$ in any $\eps$-WSNE.
\end{itemize}

\paragraph{\bf \AND gates.}
For a gate $g = (\AND, u, v, w)$ we create the following gadget with players $u, v$ and $w$, where $u$ and $v$ are the in-neighbors of $w$. The payoffs of $w$ are as follows.
\begin{itemize}
	\item If $s_u(\one) = 1$ and $s_v(\one) = 1$, then $w$ gets payoff 0 from playing $\zero$ and payoff 1 from playing $\one$.
	\item For any other action profile of $u$ and $v$, player $w$ gets 1 from $\zero$ and 0 from $\one$.
\end{itemize}
Next we argue that this gadget works correctly.
\begin{itemize}
	\item[-] If $s_u(\one) = 1$ and $s_v(\one) = 1$, i.e. both $u$ and $v$ encode 1, observe that for player $w$ action \zero yields payoff 0, while action \one yields payoff 1. Hence, by the constraints imposed by \eps-WSNE it must hold that	$s_w(\one)=1$, and thus $w$ encodes $1$. 
	\item[-] If at least one of $u$ or $v$ encodes 0, then for player $w$ action \zero yields expected payoff 1 while action \one yields expected payoff 0. Hence, by the constraints imposed by \eps-WSNE it must hold that	$s_w(\zero)=1$, and thus $w$ encodes $0$.
\end{itemize}

\paragraph{\bf \PURE gates.}
For a gate $g = (\PURE, u, v, w)$ we create the following gadget with $d+3$ players.
We introduce auxiliary players $u_1, u_2, \ldots, u_d$. Each player $u_i$ has a unique incoming edge from $u$. The idea is that in any \eps-WSNE, every player $u_i$ ``copies'' the strategy of player $u$.
\begin{itemize}
	\item If $u$ plays \zero, then $u_i$ gets 1 from \zero and 0 from \one.
	\item If $u$ plays \one, then $u_i$ gets 0 from zero and 1 from \one.
\end{itemize}

\begin{lemma}\label{lem:graphical-wsne-aux}
	At any \eps-WSNE the following hold for every $i \in [d]$: if $s_u(\zero)=1$, then $s_{u_i}(\zero)=1$;
	if $s_u(\one)=1$, then $s_{u_i}(\one)=1$.
\end{lemma}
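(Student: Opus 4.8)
The plan is to reuse the reasoning behind the \NOT gate gadget almost verbatim, since each auxiliary player $u_i$ is faced with a payoff structure that rewards exactly ``copying'' the action currently played by $u$. Fix $i \in [d]$ and let \vbs be an arbitrary \eps-WSNE of the constructed game. The key structural fact is that $u$ is the unique in-neighbour of $u_i$, so the expected payoffs of $u_i$ under \vbs depend only on $s_u$, and in particular only on $s_u(\zero)$ and $s_u(\one)$.

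First I would treat the case $s_u(\zero) = 1$. By the gadget's payoff specification, the expected payoff of $u_i$ is then $1$ for playing \zero and $0$ for playing \one, so $\br_{u_i}(\vbs_{-u_i}) = 1$. Because the section fixes $\eps < 1 - \frac{2}{2^d+1} < 1$, action \one gives payoff $0 < 1 - \eps = \br_{u_i}(\vbs_{-u_i}) - \eps$, so \one is not an \eps-best response for $u_i$; the definition of \eps-WSNE then forces $s_{u_i}(\one) = 0$, i.e.\ $s_{u_i}(\zero) = 1$. The case $s_u(\one) = 1$ is entirely symmetric: now \zero yields $u_i$ payoff $0$ and \one yields payoff $1$, so \zero fails to be an \eps-best response, and hence $s_{u_i}(\one) = 1$. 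This establishes both implications, for every $i \in [d]$.

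I do not anticipate any genuine obstacle: the whole argument is a single payoff comparison, identical in spirit to the correctness proof of the \NOT gate. The only point that deserves care is that the \emph{strict} inequality $\eps < 1$ — which is guaranteed by our standing choice of $\eps$ — is precisely what rules out the ``wrong'' action as an \eps-best response; were $\eps$ allowed to reach $1$, every strategy profile would vacuously be a $1$-WSNE and the gadget (indeed the whole reduction) would collapse, consistent with the trivial upper bound discussed around Theorem~\ref{thm:graphical-wsne-UB}.
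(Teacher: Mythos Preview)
Your proposal is correct and follows essentially the same argument as the paper's proof: compute the two expected payoffs of $u_i$ when $u$ plays a pure strategy, observe that the gap is $1 > \eps$, and conclude from the \eps-WSNE constraints that $u_i$ must play the matching pure action. The paper's write-up is terser, but the underlying reasoning and the use of $\eps < 1$ are identical.
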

\begin{proof}
	If $s_u(\zero)=1$, then for player $u_i$ action \zero yields payoff 1, while action \one yields payoff 0. Thus, the constraints of \eps-WSNE dictate that $s_{u_i}(\zero)=1$.
	If $s_u(\one)=1$, then for player $u_i$ action \zero yields payoff 0, while action \one yields payoff 1. Thus, the constraints of \eps-WSNE dictate that $s_{u_i}(\one)=1$.
\end{proof}
Next, we describe the payoff tensors of players $v$ and $w$; each one of them has in-degree $d$ with edges from all $u_1, u_2, \ldots, u_d$. In what follows, fix $\lambda := 1- \frac{2}{2^d+1}$. The payoffs of player $v$ are as follows.
\begin{itemize}
	\item If $v$ plays \zero and at least one of $u_1,\ldots, u_k$ plays \zero, then the payoff for $v$ is 1.
	\item If $v$ plays \zero and every one of $u_1,\ldots, u_k$ plays \one, then the payoff for $v$ is 0.
	\item If $v$ plays \one and at least one of $u_1,\ldots, u_k$ plays \zero, then the payoff for $v$ is 0.
	\item If $v$ plays \one and every one of $u_1,\ldots, u_k$ plays \one, then the payoff for $v$ is $\lambda$.
\end{itemize}
The payoffs of player $w$ are as follows.
\begin{itemize}
	\item If $w$ plays \zero and every one of $u_1,\ldots, u_k$ plays \zero, then the payoff for $w$ is $\lambda$.
	\item If $w$ plays \zero and at least one of $u_1,\ldots, u_k$ plays \one, then the payoff for $w$ is 0.
	\item If $w$ plays \one and every one of $u_1,\ldots, u_k$ plays \zero, then the payoff for $w$ is 0.
	\item If $w$ plays \one and at least one of $u_1,\ldots, u_k$ plays \one, then the payoff for $w$ is $1$.
\end{itemize}
We are now ready to prove that this construction correctly simulates a \PURE gate. We consider the different cases that arise depending on the value encoded by $u$.
\begin{itemize}
	\item[{\bf --}] $s_u(\zero)=1$, i.e., $u$ encodes 0. From Lemma~\ref{lem:graphical-wsne-aux} we know that $s_{u_i}(\zero)=1$ for every $i \in [d]$. Then, we have the following for players $v$ and $w$.
	\begin{itemize}
		\item[-] Player $v$ gets payoff 1 from action \zero and payoff 0 from action \one. Hence, in an \eps-WSNE we get that $s_v(\zero)=1$, and thus $v$ encodes 0.
		\item[-] Player $w$ gets payoff $\lambda$ from action \zero and payoff 0 from action \one. Hence, since $\eps < \lambda$, in an \eps-WSNE we get that $s_w(\zero)=1$, and thus $w$ encodes 0.
	\end{itemize}
	\item[{\bf --}] $s_u(\one)=1$, i.e., $u$ encodes 1. From Lemma~\ref{lem:graphical-wsne-aux} we know that $s_{u_i}(\one)=1$ for every $i \in [d]$. Then, we have the following for players $v$ and $w$.
	\begin{itemize}
		\item[-] Player $v$ gets payoff 0 from action \zero and payoff $\lambda$ from action \one. Hence, since $\eps < \lambda$, in an \eps-WSNE we get that $s_v(\one)=1$, and thus $v$ encodes 1.
		\item[-] Player $w$ gets payoff 0 from action \zero and payoff 1 from action \one. Hence, in an \eps-WSNE we get that $s_w(\one)=1$, and thus $w$ encodes 1.
	\end{itemize}
	\item[{\bf --}] $s_u(\one) \in (0,1)$, i.e., $u$ encodes \garbo. Then each one of the auxiliary players $u_1, \ldots, u_d$ can play a different strategy. For each $i \in [d]$, denote $s_{u_i}(\one) = p_i$, i.e., $p_i \in [0,1]$ is the probability player $u_i$ assigns on action \one. Let $P:= \prod_{i \in [d]} p_i$ and $Q := \prod_{i \in [d]} (1-p_i)$.
	Then, we have the following two cases.
	\begin{itemize}
		\item[-] $P \leq 2^{-d}$. Then we focus on player $v$: action \zero yields expected payoff $1-P \geq 1-2^{-d}$, while action \one yields expected payoff $P\cdot \lambda \leq 2^{-d} \cdot \lambda$. Then, since $\eps < \lambda$, we get that in an \eps-WSNE it must hold that $s_v(\zero) = 1$, i.e., $v$ encodes 0.
		\item[-] $P > 2^{-d}$. Then, it holds that $Q < 2^{-d}$; this is because $P\cdot Q = \prod_{i \in [d]} p_i \cdot (1-p_i) \leq (1/4)^d$.
		In this case we focus on player $w$: action \zero yields payoff $\lambda \cdot Q < \lambda \cdot 2^{-d}$, while action \one yields payoff $1-Q > 1-2^{-d}$. Then, again since $\eps < \lambda$, in any \eps-WSNE it must hold that $s_w(\one) = 1$, i.e., $w$ encodes 1.
	\end{itemize}
\end{itemize}

From the arguments given above, we have that in any \wn{\eps} of the
graphical game, with $\eps < 1 - \frac{2}{2^d + 1}$, the players correctly encode a solution to the \pcircuit instance.
\begin{theorem}
	\label{thm:wsne-graph}
	Computing an \wn{\eps} in two-action graphical games with maximum in-degree $d \geq 2$ is \ppad/-complete for any $\eps < 1 - \frac{2}{2^d + 1}$.
\end{theorem}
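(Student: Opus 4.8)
The plan is to prove the two halves of \ppad/-completeness separately. Membership in \ppad/ is the easy half: the constructed objects will be finite games with payoffs in $[0,1]$, so by Nash's theorem each has an exact Nash equilibrium, which is a $0$-WSNE and hence an \wn{\eps}; this makes the search problem total, and finding an approximate equilibrium of a succinctly represented game of bounded in-degree lies in \ppad/ by the usual fixed-point machinery (cf.\ \cite{DaskalakisGP09-Nash, DFHM22}). So essentially all of the content is in the hardness half, and, as the excerpt makes clear, the substantive part of that half — correctness of the \NOT, \AND, and \PURE gadgets — has already been established; what remains is to assemble the gadgets into one game and verify the reduction is legitimate.

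For hardness I would reduce from \pcircuit restricted to interaction graphs of in-degree at most $2$ and total degree at most $3$, which is \ppad/-complete by Theorem~\ref{thm:pancircuit}. Fix $d \geq 2$ and $\eps < 1 - \frac{2}{2^d+1}$; observe this is precisely the condition $\eps < \lambda$ under which all of the gadget analyses were carried out. Given a \pcircuit instance $(V,G)$, build a two-action graphical game with action set $\{\zero,\one\}$ by creating one player for each $v \in V$ and then, for each gate $g \in G$, installing the corresponding gadget: \NOT and \AND gadgets only specify the payoff tensor of the (unique) output player and introduce no new vertices, whereas a \PURE gadget additionally creates $d$ fresh auxiliary players $u_1,\dots,u_d$ together with the edges $u \to u_i$ and specifies the tensors of $u_1,\dots,u_d$ and of the two outputs. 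The one point needing care is that each node is the output of exactly one gate, so every node-player receives its in-edges from a single gadget and its payoff tensor is unambiguously and totally defined (in each gadget the listed cases partition all action profiles of the in-neighbours). Hence the maximum in-degree is $\max\{1,2,d\} = d$: the output of a \NOT gate and each \PURE-auxiliary player have in-degree $1$, the output of an \AND gate has in-degree $2$, and the two outputs of a \PURE gate have in-degree exactly $d$. The reduction runs in polynomial time: there are at most $|V|$ gates, hence at most $d\,|V|$ auxiliary players, and each payoff tensor has $O(2^d)=O(1)$ entries.

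To finish, I would argue correctness of the reduction. Since the constructed game is finite with payoffs in $[0,1]$, it has an \wn{\eps} (indeed a $0$-WSNE), so the produced instance always has a solution. Given any \wn{\eps} $\vbs$, decode an assignment $\valonly : V \to \{0,1,\garbo\}$ by the rule of the construction ($\val{v}=0$ if $s_v(\zero)=1$, $\val{v}=1$ if $s_v(\one)=1$, and $\val{v}=\garbo$ otherwise); this is polynomial-time computable from $\vbs$. Because the payoff of each node-player depends only on the in-neighbours belonging to the gate that outputs to it, the equilibrium conditions of $\vbs$ decompose gate by gate, and the gadget analyses already given — directly for \NOT and \AND, and for \PURE via Lemma~\ref{lem:graphical-wsne-aux} together with the case split on the value encoded by the input (the case $\val{u}=\garbo$ being handled by the $P\leq 2^{-d}$ versus $P>2^{-d}$ dichotomy) — show that $\valonly$ satisfies every gate of $(V,G)$. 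Thus $\valonly$ is a valid \pcircuit solution, giving a polynomial-time many-one reduction, which together with membership yields \ppad/-completeness (the matching positive result being Theorem~\ref{thm:graphical-wsne-UB}). I do not expect a genuine obstacle: the only step demanding attention is the bookkeeping of the previous paragraph, namely checking that overlaying the gadgets on a shared vertex set yields a single well-formed graphical game of in-degree at most $d$, and this is exactly where the \pcircuit hypothesis that each node is the output of exactly one gate is used.
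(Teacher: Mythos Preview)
Your proposal is correct and follows essentially the same approach as the paper: the paper's proof consists precisely of the gadget constructions and correctness arguments that precede the theorem statement, together with the one-line observation that any \wn{\eps} of the constructed game decodes to a valid \pcircuit solution. Your additional bookkeeping (in-degree bound, polynomial-time construction, well-definedness of the tensors via the ``unique output'' property of \pcircuit, and explicit \ppad/-membership) makes explicit details the paper leaves implicit, but the argument is the same.
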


We can see that the constructed game is not win-lose since there is a payoff $\lambda \notin \{0,1\}$ in the gadget that simulates \PURE gates. However, if we set $\lambda = 1$, and use verbatim the analysis from above, we will get \ppad/-hardness for \eps-WSNE with  $\eps < 1 - \frac{1}{2^{d-1}}$.

\begin{theorem}
	\label{thm:wsne-graph-win-lose}
	Computing an \wn{\eps} in two-action  win-lose graphical games with maximum in-degree $d \geq 2$ is \ppad/-complete for any $\eps < 1 - \frac{1}{2^{d-1}}$.
\end{theorem}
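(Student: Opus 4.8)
The plan is to reuse, essentially verbatim, the reduction from \pcircuit built for \Cref{thm:wsne-graph}, changing only the single parameter from $\lambda = 1 - \frac{2}{2^d+1}$ to $\lambda = 1$. The only payoff of that construction that does not already lie in $\{0,1\}$ is $\lambda$, and it occurs exactly twice, both inside the \PURE gadget: once as the payoff of $v$ when $v$ and all of $u_1,\dots,u_d$ play \one, and once as the payoff of $w$ when $w$ and all of $u_1,\dots,u_d$ play \zero. Setting $\lambda = 1$ makes every payoff of the game lie in $\{0,1\}$, so the resulting two-action graphical game is win-lose; the \NOT gadget, the \AND gadget, and the copy gadget of \Cref{lem:graphical-wsne-aux} use no payoff other than $0$ and $1$ and are left untouched, and since the underlying graph is unchanged the maximum in-degree remains $d$ for $d \ge 2$.

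First I would note that \Cref{lem:graphical-wsne-aux} is unaffected, as its gadget never referred to $\lambda$; hence in any \wn{\eps} each auxiliary player $u_i$ still copies any pure strategy of $u$. It then remains to recheck the three cases of the correctness proof for the \PURE gadget with $\lambda=1$. When $u$ encodes $0$ or $1$, the relevant payoff gap (for $w$ when $u$ encodes $0$, and for $v$ when $u$ encodes $1$), which equalled $\lambda$ before, now equals $1$, so the intended encoding is still forced for every $\eps < 1$. The only case whose constant degrades is $u$ encoding \garbo; keeping the notation $p_i = s_{u_i}(\one)$, $P = \prod_{i\in[d]} p_i$, $Q = \prod_{i\in[d]}(1-p_i)$, and $PQ \le 4^{-d}$:
\begin{itemize}
	\item if $P \le 2^{-d}$, then player $v$ obtains $1-P \ge 1-2^{-d}$ from \zero and $P\lambda = P \le 2^{-d}$ from \one, a gap of at least $1 - 2^{1-d}$, so $s_v(\zero)=1$ in every \wn{\eps} with $\eps < 1 - 2^{1-d}$;
	\item if $P > 2^{-d}$, then $Q < 2^{-d}$, and player $w$ obtains $\lambda Q = Q < 2^{-d}$ from \zero and $1-Q > 1-2^{-d}$ from \one, again a gap exceeding $1 - 2^{1-d}$, so $s_w(\one)=1$.
\end{itemize}
Thus, whenever $\eps < 1 - 2^{1-d} = 1 - \frac{1}{2^{d-1}}$ (which in particular forces $\eps < 1$, covering the first two cases), every \wn{\eps} of the win-lose game encodes a solution of the \pcircuit instance, and \ppad/-hardness follows from \Cref{thm:pancircuit}; membership in \ppad/ is inherited exactly as for \Cref{thm:wsne-graph}.

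There is no real obstacle here; the one thing to be careful about is precisely this degradation of the constant. The value $\lambda = 1 - \frac{2}{2^d+1}$ in \Cref{thm:wsne-graph} was tuned so that the \PURE-gadget gap equals $\lambda$ itself, whereas with $\lambda = 1$ the same computation only yields a gap of $1 - \frac{2}{2^d} = 1 - \frac{1}{2^{d-1}}$, which is exactly the threshold claimed; this weaker constant is the price of forcing win-lose payoffs with this gadget. As an endpoint, at $d=2$ the statement asserts that finding a \wn{\eps} is \ppad/-hard already in win-lose graphical games of maximum in-degree $2$ for every $\eps < 1/2$.
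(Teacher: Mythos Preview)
Your proposal is correct and follows exactly the approach the paper takes: the paper's entire argument for this theorem is the single sentence ``if we set $\lambda = 1$, and use verbatim the analysis from above, we will get \ppad/-hardness for \eps-WSNE with $\eps < 1 - \frac{1}{2^{d-1}}$,'' and you have faithfully spelled out that verbatim analysis, including the key recomputation of the \garbo-case gap as $1 - 2^{1-d}$.
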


Since for every constant $\eps < 1$ we can find a constant $d$ such that Theorem~\ref{thm:wsne-graph-win-lose} holds, we get the following corollary.

\begin{corollary}
	\label{cor:wsne-PPAD-any-eps}
	For any constant $\eps < 1$, computing an \eps-WSNE in two-action win-lose graphical games is \ppad/-complete.
\end{corollary}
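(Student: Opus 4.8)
The plan is to derive the corollary directly from Theorem~\ref{thm:wsne-graph-win-lose} by choosing the in-degree parameter $d$ appropriately as a function of $\eps$, together with the standard fact that computing an $\eps$-WSNE in a graphical game lies in \ppad/. For the hardness direction, fix an arbitrary constant $\eps < 1$. I would first rewrite the condition of Theorem~\ref{thm:wsne-graph-win-lose}: the inequality $\eps < 1 - \frac{1}{2^{d-1}}$ is equivalent to $\frac{1}{2^{d-1}} < 1 - \eps$, i.e.\ to $2^{d-1} > \frac{1}{1-\eps}$, i.e.\ to $d > 1 + \log_2 \frac{1}{1-\eps}$. Since $\eps$ is a constant strictly below $1$, the quantity $1-\eps$ is a fixed positive constant, so $d := \max\{2,\ \ceil{2 + \log_2 \frac{1}{1-\eps}}\}$ is a well-defined constant with $d \geq 2$ and $d > 1 + \log_2 \frac{1}{1-\eps}$, hence $1 - \frac{1}{2^{d-1}} > \eps$. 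Substituting this $d$ into Theorem~\ref{thm:wsne-graph-win-lose} shows that computing an $\eps$-WSNE in two-action win-lose graphical games of maximum in-degree $d$ is \ppad/-hard; and because $d$ is a constant, every such game is specified by $n$ payoff tensors of size $2^{d+1} = O(1)$ each, so the instance has polynomial size and we already obtain \ppad/-hardness for the general (unrestricted-degree) problem.

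For membership in \ppad/, I would appeal to the standard argument that finding an approximate equilibrium of a succinctly represented game is in \ppad/: an exact Nash equilibrium of a graphical game always exists (Nash's theorem) and is in particular a $0$-WSNE, hence an $\eps$-WSNE for every $\eps \geq 0$, so the search problem is total; and it reduces in polynomial time to the canonical \ppad/-complete problem in the usual way (see~\cite{DaskalakisGP09-Nash}). One mildly delicate point is that exact equilibria may be irrational, but for $\eps$-WSNE with $\eps > 0$ a rational solution always exists near an exact equilibrium — perturbing the mixed strategies by a sufficiently small rational amount keeps every played action an $\eps$-best response — and this rational witness is what the \ppad/ reduction produces. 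Combining this with the hardness direction above yields \ppad/-completeness.

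The argument has no deep obstacle beyond this bookkeeping; the one thing to be careful about is that $d$ must be selected as a genuine constant that does not grow with the input size, which is precisely why the statement requires $\eps$ to be a constant bounded away from $1$ (rather than, say, $\eps = 1 - 1/\mathrm{poly}(n)$), and that the chosen $d$ satisfies $d \geq 2$, as demanded by the hypothesis of Theorem~\ref{thm:wsne-graph-win-lose} — both of which are ensured by the displayed choice of $d$.
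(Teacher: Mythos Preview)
Your proposal is correct and follows the same approach as the paper: the paper's proof is the single sentence preceding the corollary, observing that for every constant $\eps < 1$ one can find a constant $d$ for which Theorem~\ref{thm:wsne-graph-win-lose} applies, and you have simply made that choice of $d$ explicit. Your separate discussion of \ppad/-membership is unnecessary since Theorem~\ref{thm:wsne-graph-win-lose} already asserts \ppad/-\emph{completeness}, but it is not incorrect.
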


\section{Approximate Nash Equilibria}

\paragraph{\bf A straightforward upper bound.}

We first show that a $0.5$-NE can easily be 
found in any two-action graphical game. 

\begin{theorem}
	There is a polynomial-time algorithm that finds a $0.5$-NE in a two-action
	graphical game.
\end{theorem}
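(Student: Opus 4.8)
The plan is to show that the uniform strategy profile, in which every player mixes with probability $1/2$ on each of their two actions, is always a $0.5$-NE of a two-action graphical game. This is the standard observation mentioned in the introduction, so the proof should be short. First I would fix an arbitrary player $i$ and an arbitrary strategy profile $\vbs$. Under the uniform profile, the payoff of player $i$ is $u_i(\vbs) = \tfrac12 u_i(\zero, \vbsi) + \tfrac12 u_i(\one, \vbsi)$, which is exactly the average of the payoffs of the two pure actions. Writing $b := \br_i(\vbsi) = \max\{u_i(\zero,\vbsi), u_i(\one,\vbsi)\}$ for the best-response payoff and $c$ for the other (weakly smaller) payoff, we have $u_i(\vbs) = \tfrac12(b + c)$, so $\br_i(\vbsi) - u_i(\vbs) = b - \tfrac12(b+c) = \tfrac12(b-c)$.

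The key step is then to bound $b - c$. Since all payoff entries of the tensor $R_i$ lie in $[0,1]$, both $u_i(\zero,\vbsi)$ and $u_i(\one,\vbsi)$ are convex combinations (over the in-neighbours' mixed strategies) of values in $[0,1]$, hence both lie in $[0,1]$; therefore $b - c \le 1$. Consequently $\br_i(\vbsi) - u_i(\vbs) = \tfrac12(b-c) \le \tfrac12$, which is precisely the $0.5$-NE condition for player $i$. Since $i$ was arbitrary, the uniform profile is a $0.5$-NE, and it is clearly computable in polynomial time (indeed, it does not even depend on the payoffs). This completes the proof.

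I do not expect any real obstacle here: the only mild point of care is to make sure the payoff bound $u_i(k,\vbsi) \in [0,1]$ is justified from the definition of $u_i$ as an expectation of tensor entries in $[0,1]$, and to handle the trivial degenerate cases (a player with only one action) by noting the claim holds vacuously. If one wanted a slightly more streamlined write-up, one could simply say: for any two reals $x,y \in [0,1]$, $\max\{x,y\} - \tfrac12(x+y) = \tfrac12|x-y| \le \tfrac12$, and apply this with $x = u_i(\zero,\vbsi)$, $y = u_i(\one,\vbsi)$.
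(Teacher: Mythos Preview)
Your proposal is correct and follows essentially the same approach as the paper: both use the uniform strategy profile and the fact that payoffs lie in $[0,1]$ to bound each player's regret by $1/2$. The only cosmetic difference is in the algebra—the paper writes $u_i(\vbs) \ge \tfrac12\,\br_i(\vbsi) \ge \br_i(\vbsi) - \tfrac12$, whereas you compute $\br_i(\vbsi) - u_i(\vbs) = \tfrac12(b-c) \le \tfrac12$—but these are equivalent one-line arguments.
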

\begin{proof}
	Let $\vbs$ be the strategy profile in
	which all players mix uniformly over their two actions.
	Then, for each player $i$ we have
	\begin{align*}
		u_i(\vbs) &\ge 0.5 \cdot \br_i(\vbsi) \\
		&\ge \br_i(\vbsi) - 0.5,
	\end{align*}
	where the final inequality used the fact that $\br_i(\vbsi) \in [0, 1]$. 
	Thus, $\vbs$ is a $0.5$-NE.
\end{proof}

\paragraph{\bf The lower bound.}

We will show that computing a $(0.5 - \eps)$-NE of a graphical game is \ppad/-hard
for any constant $\eps > 0$ by a reduction from \pcircuit. 

Given a \pcircuit instance, 
we build a two-action graphical game, where the two actions will be named 
$\zero$ and $\one$.
Each node $v$ of the \pcircuit instance will be represented by a set of $k$
players in the game, named $v_1, v_2, \dots, v_k$, where we fix $k$ to be an odd 
number satisfying $k \ge \ln (12/\eps) \cdot 18/\eps^2$. 
Therefore, since $\eps$ is constant, $k$ is also constant.

The strategies of these players will
encode a solution to the \pcircuit problem in the following way. Given a
strategy profile \vbs, we define the assignment $\valonly$ such that
\begin{itemize}
	\item If $s_{v_i}(\zero) \ge 0.5 + \eps/3$ for all $i$, then $\val{v} = 0$.
	\item If $s_{v_i}(\one) \ge 0.5 + \eps/3$ for all $i$, then $\val{v} = 1$.
	\item In all other cases, $\val{v} = \garbo$.
\end{itemize}

We now give implementations for \NOT, \AND, and \PURE gates. We note that, in
all three cases, the payoff received by player $v_i$ is only affected by the
actions chosen by the players representing the inputs to the (unique) gate $g$
that outputs to $v$. Thus, we can argue about the equilibrium condition at $v_i$
by only considering the players involved in gate $g$, and we can ignore all
other gates while doing this.

\paragraph{\bf \NOT gates.}

For a gate $g = (\NOT, u, v)$, we use the following construction, which
specifies the games that will be played between the set of players that represent
$u$ and the set of players that represent $v$. 

Each player $v_i$ has incoming edges from all players $u_1, u_2, \dots, u_k$, and $v_i$'s payoff tensor is set as follows. 
\begin{itemize}
	\item If strictly more than\footnote{Since $k$ is odd, it is not possible for
		exactly $k/2$ players to play $\zero$.} $k/2$ of the players $u_1$ through $u_k$ play
	$\zero$, then $v_i$ receives payoff $0$ for strategy $\zero$ and payoff 1 for
	strategy $\one$.
	\item If strictly less than $k/2$ of the players $u_1$ through $u_k$ play
	$\zero$, then $v_i$ receives payoff $1$ for strategy $\zero$ and payoff 0 for
	strategy $\one$.
\end{itemize}

We now show the correctness of this construction. We start with a technical
lemma that we will use repeatedly throughout the construction.

\begin{lemma}
	\label{lem:payoff2prob}
	Suppose that player $p$ has two actions $a$ and~$b$. 
	In any $(0.5 - \eps)$-NE,
	if the payoff of
	action $a$ is at most $\eps/3$, and the payoff of action $b$ is at least $1
	- \eps/3$, then player~$p$ must play action $b$ with probability at least $0.5 + \eps/3$. 
\end{lemma}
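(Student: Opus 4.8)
The plan is a one-variable computation straight from the definition of a $(0.5-\eps)$-NE. Write $\alpha$ and $\beta$ for the payoffs of actions $a$ and $b$ against the current partial profile, so that $\alpha \le \eps/3$ and $\beta \ge 1 - \eps/3$ by hypothesis, and let $q := s_p(b)$ be the probability mass $p$ puts on $b$. Then $p$'s expected payoff is $u_p(\vbs) = (1-q)\alpha + q\beta$, while the best-response payoff satisfies $\br_p(\vbs) \ge \beta$, since it is at least the payoff of action $b$.

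First I would substitute these into the equilibrium inequality $u_p(\vbs) \ge \br_p(\vbs) - (0.5 - \eps)$ and use $\br_p(\vbs) \ge \beta$ to get $(1-q)\alpha + q\beta \ge \beta - 0.5 + \eps$; rearranging (the $q\beta$ terms cancel) turns this into $(1-q)(\beta - \alpha) \le 0.5 - \eps$. Next, since $\beta - \alpha \ge (1 - \eps/3) - \eps/3 = 1 - 2\eps/3$, and this quantity is strictly positive for the (small) constant $\eps$ in question, I can divide to obtain $1 - q \le \frac{0.5 - \eps}{1 - 2\eps/3}$.

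It then remains to check the elementary inequality $\frac{0.5 - \eps}{1 - 2\eps/3} \le 0.5 - \eps/3$: clearing the positive denominator, this is equivalent to $0.5 - \eps \le (0.5 - \eps/3)(1 - 2\eps/3) = 0.5 - 2\eps/3 + 2\eps^2/9$, i.e.\ to $0 \le \eps/3 + 2\eps^2/9$, which holds for all $\eps > 0$. Combining the last two displays gives $1 - q \le 0.5 - \eps/3$, that is $s_p(b) = q \ge 0.5 + \eps/3$, as claimed.

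There is no genuine difficulty here; the computation is routine. The only points requiring a little care are to divide by $1 - 2\eps/3$ only after noting that it is positive (which is where the implicit assumption that $\eps$ is a small enough constant — certainly $\eps < 1$ suffices — is used), and to keep track of the direction of the inequalities when cross-multiplying.
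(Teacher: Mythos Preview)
Your proof is correct and follows essentially the same approach as the paper: both apply the $(0.5-\eps)$-NE inequality together with the given payoff bounds and solve a one-variable inequality for $s_p(b)$. The paper's arithmetic is marginally simpler because it upper-bounds the payoff of $b$ by $1$ (rather than keeping $\beta$ and later using $\beta-\alpha \ge 1-2\eps/3$), which yields $u_p(\vbs) \le s_p(b)+\eps/3$ directly and avoids your division step and final rational-inequality check.
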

\begin{proof}
	Since action $b$ has payoff at least $1 - \eps/3$, we have that the best
	response payoff to $p$ is also at least $1 - \eps/3$. Hence, in any strategy
	profile \vbs that is an $(0.5 -
	\eps)$-NE, we have
	\begin{align*}
		u_{p}(\vbs) &\ge \br_{p}(\vbs) - 0.5 + \eps \\
		&\ge 1 - \eps/3 - 0.5 + \eps \\
		&= 0.5 + 2\eps/3.
	\end{align*}
	Since the payoff of $a$ is bounded by $\eps/3$, and the payoff of $b$ is bounded
	by $1$ (since all payoffs are in the range $[0, 1]$), we obtain
	\begin{align*}
		u_{p}(\vbs) &\le s_{p}(a) \cdot \eps/3 + s_{p}(b) \cdot 1 \\
		&=  (1 - s_{p}(b)) \cdot \eps/3 + s_{p}(b) \cdot 1 \\
		&= s_{p}(b) (1 - \eps/3) + \eps/3 \\
		&\le s_{p}(b) + \eps/3.
	\end{align*}
	Joining the two previous inequalities gives 
	$s_{p}(b) + \eps/3 \ge 0.5 + 2\eps/3$,
	and therefore 
	$s_{p}(b) \ge 0.5 + \eps/3$. 
\end{proof}

We can now prove that the \NOT gadget operates correctly.

\begin{lemma}
	In every $(0.5-\eps)$-NE, the following properties hold.
	\begin{itemize}
		\item If the players representing $u$ encode $0$, then the players
		representing $v$ encode $1$.
		\item If the players representing $u$ encode $1$, then the players
		representing $v$ encode $0$.
	\end{itemize}
\end{lemma}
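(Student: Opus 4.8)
The plan is to handle each output player $v_i$ separately, reducing the claim for $v_i$ to Lemma~\ref{lem:payoff2prob} with the help of a concentration bound on how the players $u_1,\dots,u_k$ behave. Fix a $(0.5-\eps)$-NE and first suppose the players representing $u$ encode $0$, i.e.\ $s_{u_j}(\zero) \ge 0.5 + \eps/3$ for every $j \in [k]$ (the case where $u$ encodes $1$ is symmetric and handled at the end). I would fix an arbitrary output player $v_i$ and let $X$ denote the number of players among $u_1,\dots,u_k$ that play $\zero$. Since distinct players mix independently, $X$ is a sum of $k$ independent $\{0,1\}$-valued random variables each of expectation at least $0.5+\eps/3$, and since $k$ is odd we always have $X \ne k/2$. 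Reading off $v_i$'s payoff tensor, its expected payoff is exactly $\Pr[X > k/2]$ for action $\one$ and exactly $\Pr[X < k/2] = 1-\Pr[X>k/2]$ for action $\zero$.

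The next step is to bound $\Pr[X<k/2]$. The mean of $X$ is at least $k(0.5+\eps/3) = k/2 + k\eps/3$, so $\{X<k/2\}$ is a downward deviation of more than $k\eps/3$ from the mean, and Hoeffding's inequality gives $\Pr[X<k/2] \le \exp(-2(k\eps/3)^2/k) = \exp(-2k\eps^2/9)$. Using $k \ge 18\ln(12/\eps)/\eps^2$, the exponent is at least $4\ln(12/\eps)$, so $\Pr[X<k/2] \le (\eps/12)^4 \le \eps/3$ for all $\eps \in (0,1/2)$. Hence $v_i$ gets payoff at most $\eps/3$ from $\zero$ and at least $1-\eps/3$ from $\one$, and Lemma~\ref{lem:payoff2prob} (with $a=\zero$, $b=\one$) gives $s_{v_i}(\one) \ge 0.5 + \eps/3$. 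Since $i$ was arbitrary, all of $v_1,\dots,v_k$ put probability at least $0.5+\eps/3$ on $\one$, which is exactly the statement that the players representing $v$ encode $1$. For the other direction, if instead $s_{u_j}(\one) \ge 0.5 + \eps/3$ for all $j$, then the count of $u_j$'s playing $\zero$ has mean at most $k/2 - k\eps/3$, the same Hoeffding estimate bounds by $\eps/3$ the probability that it exceeds $k/2$, so $v_i$ gets payoff at least $1-\eps/3$ from $\zero$ and at most $\eps/3$ from $\one$, and Lemma~\ref{lem:payoff2prob} gives $s_{v_i}(\zero) \ge 0.5+\eps/3$ for every $i$.

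Everything except the concentration step is routine bookkeeping: the payoff computation, reading off which action is the "good" one, and the two appeals to Lemma~\ref{lem:payoff2prob}. The only point that genuinely needs care is verifying that the prescribed $k = \Theta(\ln(1/\eps)/\eps^2)$ is large enough to drive the tail below $\eps/3$ — which is precisely why $k$ has that order of magnitude — and I expect this to be the main, though mild, obstacle. Since the identical "majority amplification" estimate (if every $u_j$ plays a fixed action with probability $\ge 0.5+\eps/3$, then a strict majority of them play it with probability $\ge 1-\eps/3$) will be reused verbatim in the \AND and \PURE gadgets, it is worth stating it once as a standalone claim.
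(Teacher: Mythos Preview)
Your proposal is correct and follows essentially the same approach as the paper: bound the probability that fewer than $k/2$ of the $u_j$ play the ``majority'' action via a Hoeffding-type concentration estimate, then invoke Lemma~\ref{lem:payoff2prob}. The only cosmetic difference is that the paper first uses a monotonicity argument to reduce to the i.i.d.\ case $s_{u_j}(\zero) = 0.5+\eps/3$ before applying the binomial Hoeffding bound, whereas you apply Hoeffding directly to the heterogeneous independent sum; both are equally valid and yield the same estimate.
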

\begin{proof}
	Let $\vbs$ be a $(0.5-\eps)$-NE. 
	We begin with the first claim. Since the players representing $u$ encode a $0$,
	we have that $s_{u_j}(\zero) \ge 0.5 + \eps/3$ for all~$j \in [k]$. 
	
	We start by proving an upper bound on the payoff of action \zero for player
	$v_i$. The payoff of this action increases as the players $u_j$ place
	less probability on action \zero, so we can assume 
	$s_{u_j}(\zero) = 0.5 + \eps/3$, since this minimizes the payoff of \zero to
	$v_i$. 
	
	Under this assumption, the number $N$ of players $u_j$ that play action \zero is distributed
	binomially according to $N \sim B(k, 0.5 + \eps/3)$. 
	Applying the standard Hoeffding bound~\cite{hoeffding1994probability} for the binomial distribution, and using
	the fact that $k \ge \ln (6/\eps) \cdot 9/2\eps^2$ yields the following 
	\begin{align*}
		\Pr(N \le k/2) &\le 2 \cdot \exp\left(-2k \left( 0.5 + \eps/3 - \frac{k/2}{k} \right)^2
		\right) \\
		& = 2 \cdot \exp\left(-2k \cdot \eps^2/9 \right) \\
		& \le \exp\left(-\ln(3/\eps) \right) \\
		& = \eps/3.
	\end{align*}
	Hence the payoff of action $\zero$ to player $v_i$ is at most
	$\eps/3$, and therefore the payoff of action $\one$ to $v_i$ is at least $1 -
	\eps/3$.

	So we can apply Lemma~\ref{lem:payoff2prob} to argue that $s_{v_i}(\one) \ge 0.5
	+ \eps/3$. Since this holds for all~$i$, we have that the players representing
	$v$ encode the value $1$ in the \pcircuit instance, as required.
	
	The second case can be proved in an entirely symmetric manner.
\end{proof}

\paragraph{\bf \AND gates.} 

For a gate $g = (\AND, u, v, w)$ we use the following construction. Each player
$w_i$ has in-degree $2k$ and has incoming edges from all of the players $u_1, u_2, \dots, u_k$, and all of the players $v_1, v_2, \dots, v_k$. 
The payoff tensor of $w_i$ is as follows.

\begin{itemize}
	\item If strictly more than $k/2$ of the players $u_1$ through $u_k$ play \one,
	and strictly more than $k/2$ of the players $v_1$ through $v_k$ play \one, 
	then $w_i$ receives payoff $0$ from action $\zero$ and payoff 1 from
	action $\one$.
	
	\item If this is not the case, then 
	$w_i$ receives payoff $1$ from action $\zero$ and payoff 0 from action $\one$. 
\end{itemize}

Correctness of this construction is shown in the following pair of lemmas.

\begin{lemma}
	In every $(0.5-\eps)$-NE of the game, if the players representing $u$ encode value
	$1$, and the players representing $v$ encode value $1$, then the players
	representing $w$ will encode value $1$.
\end{lemma}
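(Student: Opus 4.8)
The plan is to mirror the argument used for the \NOT gadget, but now tracking two independent sums of Bernoulli random variables at once. Fix a $(0.5-\eps)$-NE $\vbs$. Since the players representing $u$ encode $1$, we have $s_{u_j}(\one) \ge 0.5 + \eps/3$ for every $j \in [k]$, and likewise $s_{v_j}(\one) \ge 0.5 + \eps/3$ for every $j \in [k]$. Fix an arbitrary player $w_i$. Its payoff is governed entirely by the actions of $u_1,\dots,u_k$ and $v_1,\dots,v_k$, so we may reason about these $2k$ players in isolation.

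First I would upper bound the payoff of action $\zero$ for $w_i$. By the payoff tensor, $w_i$ obtains payoff $0$ from $\zero$ exactly when strictly more than $k/2$ of the $u_j$'s play $\one$ and strictly more than $k/2$ of the $v_j$'s play $\one$; otherwise it obtains payoff $1$. Hence the payoff of $\zero$ equals $\Pr[N_u \le k/2 \text{ or } N_v \le k/2]$, where $N_u$ (resp.\ $N_v$) is the number of players among $u_1,\dots,u_k$ (resp.\ $v_1,\dots,v_k$) playing $\one$. This probability is nondecreasing as any player lowers the probability it assigns to $\one$, so it is maximized when every $u_j$ and every $v_j$ plays $\one$ with probability exactly $0.5 + \eps/3$; under this assumption $N_u$ and $N_v$ are independent and each distributed as $B(k, 0.5+\eps/3)$. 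Applying the Hoeffding bound exactly as in the \NOT analysis, and using the (stronger) bound $k \ge \ln(12/\eps)\cdot 18/\eps^2$, gives $\Pr[N_u \le k/2] \le \eps/6$ and $\Pr[N_v \le k/2] \le \eps/6$. A union bound then shows that the payoff of $\zero$ for $w_i$ is at most $\eps/3$, and therefore the payoff of $\one$ for $w_i$ is at least $1 - \eps/3$.

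Finally I would invoke Lemma~\ref{lem:payoff2prob} with $a = \zero$ and $b = \one$ to conclude that $s_{w_i}(\one) \ge 0.5 + \eps/3$. Since $w_i$ was arbitrary, all of $w_1,\dots,w_k$ place probability at least $0.5 + \eps/3$ on $\one$, so by definition the players representing $w$ encode the value $1$, as required.

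The only delicate point I anticipate is bookkeeping with the constant: one must check that the fixed value of $k$ makes each of the two tail probabilities at most $\eps/6$ (rather than $\eps/3$, as sufficed for the \NOT gadget), so that the union bound over $N_u$ and $N_v$ still leaves the payoff of $\one$ above $1 - \eps/3$; this holds because $2\exp(-2k\eps^2/9) \le 2(\eps/12)^4 \le \eps/6$ when $\eps \le 1$. The independence of $N_u$ and $N_v$ (they involve disjoint player sets, each mixing independently) and the monotonicity of the relevant event in the mixing probabilities are routine. The companion lemma — if the players representing $u$ or those representing $v$ encode $0$ then those representing $w$ encode $0$ — follows by the same template: whichever side encodes $0$ has at least $(k+1)/2$ of its players playing $\zero$ with probability $\ge 0.5 + \eps/3$, which forces the ``else'' branch of $w_i$'s payoff tensor to occur with probability $\ge 1 - \eps/3$, and Lemma~\ref{lem:payoff2prob} is applied with $a = \one$, $b = \zero$.
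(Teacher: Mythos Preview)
Your proposal is correct and essentially identical to the paper's proof: both upper-bound the payoff of $\zero$ to $w_i$ by reducing to the extremal case $s_{u_j}(\one)=s_{v_j}(\one)=0.5+\eps/3$, apply Hoeffding to each of $N_u$ and $N_v$ to get tail probability at most $\eps/6$, take a union bound to bound the payoff of $\zero$ by $\eps/3$, and finish with Lemma~\ref{lem:payoff2prob}. (One harmless slip in your closing aside: encoding $0$ means \emph{all} $k$ players place probability at least $0.5+\eps/3$ on $\zero$, not just $(k+1)/2$ of them --- but this does not affect the lemma under review.)
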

\begin{proof}
	Let $\vbs$ be a $(0.5-\eps)$-NE. From the assumptions about $u$ and $v$, we have that
	$s_{u_j}(\one) \ge 0.5 + \eps/3$ for all $j$, and $s_{v_j}(\one) \ge 0.5 +
	\eps/3$ for all $j \in [k]$.
	
	We start by proving an upper bound on the payoff of $\zero$ to $w_i$. Since this payoff
	decreases as the players $u_j$ and $v_j$ place more probability on $\one$, we
	can assume that 
	$s_{u_j}(\one) = 0.5
	+ \eps/3$ for all $j$, and $s_{v_j}(\one) = 0.5 + \eps/3$ for all $j$, since
	this maximizes the payoff of $\zero$ to $w_i$. 
	
	The number $N$ of players $u_j$ that play \one is distributed binomially according
	to $N \sim B(k, 0.5 + \eps/3)$. Similarly, the number of players $v_j$ that
	play \one, are distributed according to the same distribution, that is why we focus only in the former. Using the standard Hoeffding bound for the
	binomial distribution, along with the fact that 
	$k \ge \ln (12/\eps) \cdot 9/2\eps^2$, we get
	\begin{align*}
		\Pr(N \le k/2) &\le 2 \cdot \exp\left(-2k \left( 0.5 + \eps/3 - \frac{k/2}{k} \right)^2
		\right) \\
		& = 2 \cdot \exp\left(-2k  \cdot \eps^2/9 \right) \\
		& \le \exp\left(-\ln(6/\eps) \right) \\
		& = \eps/6.
	\end{align*}
	We can then use the union bound to prove that the probability that 
	strictly less than $k/2$ of the players $u_1$ through $u_k$ play \one,
	or strictly less than $k/2$ of the players $v_1$ through $v_k$ play \one is at
	most $\eps/3$.
	
	Hence, the payoff of $\zero$ to player $w_i$ is at most $\eps/3$, and so the
	payoff of $\one$ to player $w_i$ is at least $1 - \eps/3$. Thus we can apply
	Lemma~\ref{lem:payoff2prob} to argue that $w_i$ must play $\one$ with
	probability at least $0.5 + \eps/3$. Since this holds for all $i$, we have that
	$w_1, w_2, \ldots w_k$ correctly encode value $1$.
\end{proof}

\begin{lemma}
	In every $(0.5-\eps)$-NE of the game, if the players representing $u$ encode value
	$0$, or the players representing $v$ encode value $0$, then the players
	representing $w$ will encode value $0$.
\end{lemma}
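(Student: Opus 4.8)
The plan is to repeat the argument of the preceding lemma essentially verbatim, but with the roles of $\zero$ and $\one$ interchanged and the direction of the concentration bound reflected. First I would observe that the hypothesis is a disjunction, and since the $\AND$ gadget is completely symmetric in the players representing $u$ and those representing $v$, it suffices to handle the case in which the players representing $u$ encode $0$; the case in which the players representing $v$ encode $0$ then follows by the same argument. So I fix a $(0.5-\eps)$-NE $\vbs$ with $s_{u_j}(\zero) \ge 0.5 + \eps/3$ — equivalently $s_{u_j}(\one) \le 0.5 - \eps/3$ — for every $j \in [k]$.

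Next I would bound the payoff that action $\one$ yields to an arbitrary player $w_i$. By the definition of the gadget this payoff equals the probability that strictly more than $k/2$ of the players $u_1,\dots,u_k$ play $\one$ \emph{and} strictly more than $k/2$ of the players $v_1,\dots,v_k$ play $\one$, hence it is at most the probability that strictly more than $k/2$ of the $u_j$ play $\one$. This probability is non-decreasing in each $s_{u_j}(\one)$, so I may assume $s_{u_j}(\one) = 0.5 - \eps/3$ for all $j$, in which case the number $N$ of players $u_j$ playing $\one$ satisfies $N \sim B(k, 0.5 - \eps/3)$. A Hoeffding bound, together with the fact that $k \ge \ln(12/\eps) \cdot 18/\eps^2$, then gives $\Pr(N > k/2) \le 2\exp(-2k\eps^2/9) \le \eps/6 \le \eps/3$, by exactly the computation used in the lemma for the case where $u$ and $v$ encode $1$.

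Consequently the payoff of $\one$ to $w_i$ is at most $\eps/3$ while the payoff of $\zero$ to $w_i$ is at least $1 - \eps/3$, so Lemma~\ref{lem:payoff2prob} applied with $a = \one$ and $b = \zero$ yields $s_{w_i}(\zero) \ge 0.5 + \eps/3$. Since $i \in [k]$ was arbitrary, the players representing $w$ encode $0$, as claimed. I do not expect a genuine obstacle here; the only two points that need attention are (i) discharging the disjunction in the hypothesis by invoking the $u \leftrightarrow v$ symmetry of the gadget instead of redoing the estimate, and (ii) getting the concentration inequality the right way round — here one bounds the \emph{upper} tail of a binomial whose mean lies below $k/2$, which is the mirror image of the lower-tail estimate used in the ``encode $1$'' lemma — while the reduction from the payoff bound to the probability bound $s_{w_i}(\zero)\ge 0.5+\eps/3$ is a direct reuse of Lemma~\ref{lem:payoff2prob}.
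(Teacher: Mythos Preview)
Your proposal is correct and follows essentially the same approach as the paper: reduce to the case where $u$ encodes $0$ by the $u\leftrightarrow v$ symmetry of the gadget, bound the payoff of $\one$ to $w_i$ by a Hoeffding tail estimate on the number of $u_j$ playing $\one$, and conclude via Lemma~\ref{lem:payoff2prob}. If anything, your write-up is cleaner than the paper's, which contains an apparent typo (it writes $N\sim B(k,0.5+\eps/3)$ for the number of $u_j$ playing $\one$ and then bounds $\Pr(N\le k/2)$, whereas your $N\sim B(k,0.5-\eps/3)$ with an upper-tail bound is the intended computation).
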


\begin{proof}
	We will provide a proof for the case where the players representing $u$ encode
	value $0$, since the other case is entirely symmetric. 
	
	Let $\vbs$ be an $(0.5-\eps)$-NE. By assumption we have that $s_{u_j}(\zero) \ge 0.5 +
	\eps/3$ for all $j$.
	We start by proving an upper bound on the payoff of $\one$ to $w_i$. Since the
	payoff of this strategy decreases as $u_j$ places more probability on $\zero$,
	we can assume that 
	$s_{u_j}(\zero) = 0.5 + \eps/3$ for all $j$, since this minimizes the payoff of
	$\one$ to $w_i$. 
	
	The number of players $u_j$ that play \one is distributed binomially according
	to $N \sim B(k, 0.5 + \eps/3)$. Using the standard Hoeffding bound for the
	binomial distribution, along with the fact that 
	$k \ge \ln (6/\eps) \cdot 9/2\eps^2$, we get
	\begin{align*}
		\Pr(N \le k/2) & \le 2 \cdot  \exp\left(-2k \left( 0.5 + \eps/3 - \frac{k/2}{k} \right)^2
		\right) \\
		& = 2 \cdot \exp\left(-2k \cdot \eps^2/9 \right) \\
		& \le \exp\left(-\ln(3/\eps) \right) \\
		& = \eps/3.
	\end{align*}
	Hence, the payoff of $\one$ to player $w_i$ is at most $\eps/3$, and so the
	payoff of $\zero$ to player $w_i$ is at least $1 - \eps/3$. Thus, we can apply
	Lemma~\ref{lem:payoff2prob} to argue that $w_i$ must play $\zero$ with
	probability at least $0.5 + \eps/3$. Since this holds for all $i$, we have that
	$w_1, w_2, \ldots w_k$ correctly encode value $0$.
\end{proof}

\paragraph{\bf \PURE gates.}

For a gate $g = (\PURE, u, v, w)$, we use the following construction. Each
player $v_i$ will have incoming edges from all players $u_1, u_2, \dots, u_k$, with 
their payoff tensor set as follows.
\begin{itemize}
	\item If strictly more than $(0.5 - \eps/6) \cdot k$ of the players $u_1$
	through $u_k$ play $\one$, then $v_i$ receives payoff $0$ from
	action $\zero$ and payoff $1$ from action $\one$. 
	
	\item If this is not the case, then $v_i$ receives payoff $1$ from action
	$\zero$ and payoff $0$ from action $\one$.
	
\end{itemize}
Each player $w_i$ 
will have incoming edges from all players $u_1, u_2, \dots, u_k$, with their payoff tensor set as follows.
\begin{itemize}
	\item If strictly more than $(0.5 + \eps/6) \cdot k$ of the players $u_1$
	through $u_k$ play $\one$, then $w_i$ receives payoff $0$ from 
	action $\zero$ and payoff $1$ from action $\one$. 
	
	\item If this is not the case, then $w_i$ receives payoff $1$ from action
	$\zero$ and payoff $0$ from action $\one$.
	
\end{itemize}
The following lemma shows that $v$ is correctly simulated.

\begin{lemma}
	\label{lem:nepurev}
	Let $\vbs$ be a $(0.5 - \eps)$-NE, and let $E[X]$ denote the expected number of
	players $u_i$ that play strategy $\one$ under $\vbs$. 
	\begin{itemize}
		\item If $E[X] \le (0.5 - \eps/3) \cdot k$, then the players representing $v$
		will encode value $0$. 
		
		\item If $E[X] \ge 0.5 \cdot k$, then the players representing $v$ will encode
		value $1$. 
	\end{itemize}
\end{lemma}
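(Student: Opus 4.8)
The plan is to handle the two cases separately, each time using a Hoeffding bound to control the number $X$ of players $u_i$ playing $\one$, and then invoking Lemma~\ref{lem:payoff2prob} exactly as in the \NOT and \AND gadgets. Consider first the case $E[X] \le (0.5 - \eps/3)\cdot k$. I want to show that each player $v_i$ sees action $\zero$ with payoff close to $1$. Player $v_i$ gets payoff $1$ from $\zero$ unless strictly more than $(0.5 - \eps/6)\cdot k$ of the $u_j$ play $\one$, i.e.\ unless $X > (0.5 - \eps/6)\cdot k$. Since the $u_j$ play independently (each $u_j$'s mixed strategy is fixed in the profile \vbs), $X$ is a sum of independent $\{0,1\}$ random variables with mean $E[X] \le (0.5 - \eps/3)k$, so by Hoeffding's bound
\[
\Pr\bigl(X > (0.5 - \eps/6)\cdot k\bigr) \le \Pr\bigl(X - E[X] > (\eps/6)\cdot k\bigr) \le \exp\!\bigl(-2k(\eps/6)^2\bigr) = \exp\!\bigl(-k\eps^2/18\bigr).
\]
Using $k \ge \ln(12/\eps)\cdot 18/\eps^2$, this is at most $\eps/12 \le \eps/3$. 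Hence the payoff of $\zero$ to $v_i$ is at least $1 - \eps/3$, and the payoff of $\one$ is at most $\eps/3$ (the two payoffs are $0$ and $1$ in complementary events). Lemma~\ref{lem:payoff2prob} then gives $s_{v_i}(\zero) \ge 0.5 + \eps/3$; since this holds for every $i$, the players representing $v$ encode $0$.

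For the second case, $E[X] \ge 0.5\cdot k$, I argue symmetrically that $v_i$ sees action $\one$ with payoff close to $1$. Now $v_i$ gets payoff $1$ from $\one$ precisely when $X > (0.5 - \eps/6)\cdot k$, so I need $\Pr\bigl(X \le (0.5 - \eps/6)\cdot k\bigr)$ to be small. Since $E[X] \ge 0.5 k$, this event requires $X - E[X] \le -(\eps/6)\cdot k$ (actually at least that far below the mean), so Hoeffding again bounds it by $\exp(-k\eps^2/18) \le \eps/12 \le \eps/3$. Thus the payoff of $\one$ to $v_i$ is at least $1 - \eps/3$ and the payoff of $\zero$ is at most $\eps/3$, and Lemma~\ref{lem:payoff2prob} yields $s_{v_i}(\one) \ge 0.5 + \eps/3$ for all $i$, so the players representing $v$ encode $1$.

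The two cases are the whole content; no subtle step remains beyond being slightly careful that the threshold gap used in the exponent is $\eps/6$ (not $\eps/3$), which is why the required lower bound on $k$ has $18/\eps^2$ rather than $9/2\eps^2$ as in the earlier gadgets, and that the "strictly more than" in the payoff rule and the direction of the relevant tail event line up. The only mild obstacle is bookkeeping: making sure that in the first case the bad event is $X$ \emph{above} a threshold while in the second it is $X$ \emph{below} a (slightly different, but compatibly placed) threshold, so that in both cases the deviation from the mean is at least $(\eps/6)k$ and Hoeffding applies cleanly. An analogous lemma for $w$, with the thresholds shifted by $+\eps/6$ instead of $-\eps/6$ and the roles of $E[X] \ge (0.5+\eps/3)k$ versus $E[X] \le 0.5 k$, will follow by the same argument, and together these let us verify the \PURE gate's two defining conditions.
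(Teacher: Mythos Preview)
Your proof is correct and follows essentially the same approach as the paper: bound the relevant tail of $X$ by Hoeffding using the $(\eps/6)\cdot k$ gap between the payoff threshold and the hypothesis on $E[X]$, obtain $\exp(-k\eps^2/18)\le \eps/3$, and then invoke Lemma~\ref{lem:payoff2prob}. The paper only spells out the first case and declares the second symmetric, whereas you write out both; the constants and the Hoeffding step match.
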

\begin{proof}
	We will prove only the first case, since the second case can be proved in an
	entirely symmetric manner. 
	Using the fact that $E[X] \le (0.5 - \eps/3) \cdot k$, then applying Hoeffding's
	inequality, and using the fact that $k \ge \ln (3/\eps) \cdot 18/\eps^2$ we get
	\begin{align*}
		\Pr(X \ge (0.5 - \eps/6) \cdot k) 
		&\le \Pr(X - E[X] \ge \eps/6 \cdot k) \\
		& \le \exp\left( \frac{-2 (k \cdot \eps/6)^2}{k} \right) \\
		& = \exp \left( -k \cdot \eps^2/18 \right) \\
		& \le \eps/3.
	\end{align*}
	Therefore, the payoff of strategy $\one$ to $v_i$ is at most $\eps/3$, and so
	the payoff of strategy $\zero$ to $v_i$ is at least $1 - \eps/3$. Thus we can
	apply Lemma~\ref{lem:payoff2prob} to argue that 
	$s_{v_i}(\zero) \ge 0.5 + \eps/3$. Since this applies for all $i$, we have that
	the players representing $v$ encode value $0$, as required.
\end{proof}

\noindent
The next lemma shows that $w$ is also correctly simulated; its proof is omitted since it is entirely symmetric to the proof of Lemma~\ref{lem:nepurev}.

\begin{lemma}
	Let $\vbs$ be a $(0.5 - \eps)$-NE, and let $E[X]$ denote the expected number of
	players $u_i$ that play strategy $\one$ under $\vbs$. 
	\begin{itemize}
		\item If $E[X] \le 0.5 \cdot k$, then the players representing $w$
		will encode value $0$. 
		
		\item If $E[X] \ge (0.5 + \eps/3) \cdot k$, then the players representing $w$ will encode
		value $1$. 
	\end{itemize}
\end{lemma}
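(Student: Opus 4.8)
The plan is to follow the proof of Lemma~\ref{lem:nepurev} essentially verbatim, swapping the roles of the actions $\zero$ and $\one$ and using the threshold $(0.5 + \eps/6)\cdot k$ that is hard-wired into the payoff tensor of $w_i$ (in place of the threshold $(0.5 - \eps/6)\cdot k$ used for $v_i$). As in that lemma, I would treat the two bullets separately and write out only one of them in full, the other following by the obvious symmetry that exchanges $\zero$ with $\one$. Throughout, let $X$ denote the number of players among $u_1,\dots,u_k$ that play $\one$ under $\vbs$, so that $E[X]$ is its expectation as in the statement, and recall that $k \ge \ln(12/\eps)\cdot 18/\eps^2 \ge \ln(3/\eps)\cdot 18/\eps^2$.

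For the first bullet I would assume $E[X] \le 0.5\cdot k$. The payoff of action $\one$ to player $w_i$ equals $\Pr(X > (0.5 + \eps/6)\cdot k)$. Since $E[X] \le 0.5\cdot k$, this quantity is at most $\Pr(X - E[X] \ge (\eps/6)\cdot k)$, and Hoeffding's inequality together with $k \ge \ln(3/\eps)\cdot 18/\eps^2$ bounds it by $\exp(-k\eps^2/18) \le \eps/3$ — exactly the computation already carried out in Lemma~\ref{lem:nepurev}. Hence the payoff of $\one$ to $w_i$ is at most $\eps/3$, so the payoff of $\zero$ to $w_i$ is at least $1 - \eps/3$; applying Lemma~\ref{lem:payoff2prob} with $a = \one$ and $b = \zero$ gives $s_{w_i}(\zero) \ge 0.5 + \eps/3$. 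As this holds for every $i \in [k]$, the players representing $w$ encode value $0$.

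For the second bullet I would assume $E[X] \ge (0.5 + \eps/3)\cdot k$. Now the payoff of action $\zero$ to $w_i$ equals $\Pr(X \le (0.5 + \eps/6)\cdot k)$, which is at most $\Pr(E[X] - X \ge (\eps/6)\cdot k)$ and hence again at most $\exp(-k\eps^2/18) \le \eps/3$ by Hoeffding and the choice of $k$. Thus the payoff of $\zero$ to $w_i$ is at most $\eps/3$ and the payoff of $\one$ to $w_i$ is at least $1 - \eps/3$, so Lemma~\ref{lem:payoff2prob} with $a = \zero$ and $b = \one$ yields $s_{w_i}(\one) \ge 0.5 + \eps/3$ for all $i$; hence the players representing $w$ encode value $1$.

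I do not expect any genuine obstacle here; the argument is a mechanical mirror of Lemma~\ref{lem:nepurev}. The only point that warrants a moment's care is checking that, in both cases, the distance between the threshold $(0.5 + \eps/6)\cdot k$ appearing in $w_i$'s payoffs and the extreme value of $E[X]$ allowed by the hypothesis is exactly $(\eps/6)\cdot k$, so that the same Hoeffding estimate — with the same exponent $-k\eps^2/18$ and the same lower bound $k \ge \ln(3/\eps)\cdot 18/\eps^2$ — applies without any change to the constants.
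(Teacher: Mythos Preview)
Your proposal is correct and is precisely the argument the paper has in mind: the paper explicitly omits this proof as ``entirely symmetric to the proof of Lemma~\ref{lem:nepurev}'', and your write-up carries out that symmetry with the threshold $(0.5+\eps/6)\cdot k$ in place of $(0.5-\eps/6)\cdot k$, the same one-sided Hoeffding bound with exponent $-k\eps^2/18$, and the same appeal to Lemma~\ref{lem:payoff2prob}. The check you flag at the end --- that the gap between the hypothesis on $E[X]$ and the threshold in $w_i$'s payoffs is $(\eps/6)\cdot k$ in both bullets --- is exactly the point that makes the constants line up, and you have it right.
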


\noindent
Combining the two previous lemmas, we can see that the construction correctly
simulates a \PURE gate.

\begin{itemize}
	\item If the players representing $u$ encode value $0$, then $E[X] \le 0.5 -
	\eps/3$, and so both the players representing $v$ and those representing $w$ encode value $0$. 
	\item If the players representing $u$ encode value $1$, then $E[X] \ge 0.5 +
	\eps/3$, and so both the players representing $v$ and those representing $w$ encode value $1$. 
	\item In all other cases we can verify that either the players representing $v$ or the players representing $w$ encode
	a $0$ or a $1$. Specifically, if $E[X] \le 0.5 \cdot k$ then the players representing $w$ encode value $0$,
	while if $E[X] \ge 0.5 \cdot k$, then the players representing $v$ encode value $1$.
\end{itemize}

\paragraph{\bf The hardness result.}

From the arguments given above, we have that in an $(0.5 - \eps)$-NE of the
graphical game, the players correctly encode a solution to the \pcircuit
instance.
Note also that, since Theorem~\ref{thm:pancircuit} gives hardness for
\pcircuit even when the total degree of each node is 3, the graphical game that we
have built has total degree at most $3k$. Thus, the game can be built in polynomial time.

\begin{theorem}
	It is \ppad/-hard to find a $(0.5 - \eps)$-NE in a 
	two-action graphical game for any constant $\eps > 0$.
\end{theorem}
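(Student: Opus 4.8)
The plan is to prove \ppad/-hardness by a polynomial-time many-one reduction from \pcircuit. Invoking Theorem~\ref{thm:pancircuit}, I start from a \pcircuit instance $(V,G)$ whose interaction graph has in-degree at most $2$ and total degree at most $3$, and I build the two-action graphical game exactly as assembled above: replace each node $v \in V$ by a block of $k$ players $v_1,\dots,v_k$, where $k$ is the fixed odd constant with $k \ge \ln(12/\eps)\cdot 18/\eps^2$, and replace each gate $(T,u,v,w) \in G$ by the corresponding \NOT, \AND, or \PURE gadget, wiring the edges so that the players of an output block receive incoming edges only from the player blocks of that gate's inputs. Every player has exactly the two actions $\zero$ and $\one$, so the result is a two-action (in fact win-lose) graphical game.

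First I would check that the reduction runs in polynomial time and that the target problem is total. Since $\eps$ is a fixed constant, so is $k$; the game then has $kn = O(n)$ players, and because every node of the \pcircuit interaction graph has total degree at most $3$, every player of the game has total degree at most $3k$, a constant. Hence each payoff tensor has only $O(2^{3k})$ entries, each directly readable from the gadget description, so the entire game has size $O(n)$ and is produced in polynomial time. Totality is immediate from Nash's theorem: the finite game has an exact Nash equilibrium, which is in particular a $(0.5-\eps)$-NE, so the instance output by the reduction always has a solution.

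Next I would establish correctness, i.e.\ that every $(0.5-\eps)$-NE $\vbs$ of the game decodes to a \pcircuit solution under the stated rule ($\val{v}=0$ if $s_{v_i}(\zero) \ge 0.5+\eps/3$ for all $i$, $\val{v}=1$ if $s_{v_i}(\one) \ge 0.5+\eps/3$ for all $i$, and $\val{v}=\garbo$ otherwise; the first two cases are mutually exclusive since $2(0.5+\eps/3) > 1$, so this is well defined). Because the payoff of each player $v_i$ depends only on the input blocks of the unique gate outputting to $v$, the equilibrium condition at the block of $v$ can be analysed gate by gate. For a \NOT gate the gadget lemma gives $\val{u}=0 \Rightarrow \val{v}=1$ and $\val{u}=1 \Rightarrow \val{v}=0$; for an \AND gate the two gadget lemmas give $\val{u}=\val{v}=1 \Rightarrow \val{w}=1$ and ($\val{u}=0$ or $\val{v}=0$) $\Rightarrow \val{w}=0$; for a \PURE gate, Lemma~\ref{lem:nepurev} and its symmetric counterpart for $w$ give $\val{u}\in\{0,1\} \Rightarrow \val{v}=\val{w}=\val{u}$, and in the remaining case the split on whether $E[X] \le 0.5k$ or $E[X] \ge 0.5k$ forces at least one of $\val{v},\val{w}$ into $\{0,1\}$. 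These are exactly the \pcircuit gate constraints, so $\valonly$ solves $(V,G)$.

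Combining the three ingredients — polynomial-time construction, guaranteed existence of a $(0.5-\eps)$-NE, and polynomial-time decoding of any such equilibrium into a \pcircuit solution — gives the reduction, and \ppad/-hardness follows from Theorem~\ref{thm:pancircuit}. I expect the \PURE gadget to be the only genuine obstacle: unlike \NOT and \AND it cannot be realised ``exactly'', so one has to choose the asymmetric thresholds $(0.5 - \eps/6)k$ and $(0.5 + \eps/6)k$ and the constant $k$ so that Hoeffding's inequality pins down the $v$- and $w$-blocks in all three regimes of $E[X]$ at once while still permitting a \garbo output — but this is precisely what Lemma~\ref{lem:nepurev} and its mirror image already package, so for the present theorem it amounts to invoking them in the right combination.
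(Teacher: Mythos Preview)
Your proposal is correct and follows essentially the same approach as the paper: the theorem in the paper is simply the wrap-up of the preceding gadget constructions, noting that the game has total degree at most $3k$ (hence is built in polynomial time) and that the gadget correctness lemmas together force any $(0.5-\eps)$-NE to decode to a \pcircuit solution. You reproduce this argument faithfully, with the minor additions of explicitly checking well-definedness of the decoding and totality via Nash's theorem, neither of which the paper spells out.
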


In fact, the payoff entries in all gadgets are 0 or 1. Thus, our \ppad/-hardness result holds for win-lose games too.

\begin{corollary}
	For any constant $\eps > 0$, it is \ppad/-hard to find a $(0.5 - \eps)$-NE in a 
	two-action win-lose graphical game.
\end{corollary}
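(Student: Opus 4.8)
The plan is to observe that the reduction already constructed to prove the preceding theorem (the $(0.5-\eps)$-NE hardness) in fact produces a win-lose game, so nothing further is needed beyond inspecting the payoffs. Concretely, I would revisit each of the three gadget constructions and check the codomain of the payoff tensors. In the \NOT gadget, player $v_i$ receives payoff $0$ or $1$ for each of its two actions, depending only on whether strictly more or strictly fewer than $k/2$ of $u_1,\dots,u_k$ play \zero. In the \AND gadget, player $w_i$ likewise receives only $0$ or $1$, depending on the two threshold comparisons for the $u$-players and the $v$-players. In the \PURE gadget, both $v_i$ and $w_i$ receive only $0$ or $1$, with the thresholds $(0.5 - \eps/6)\cdot k$ and $(0.5 + \eps/6)\cdot k$ affecting only \emph{which} Boolean value a given entry takes, not whether that value is Boolean. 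In contrast to the \PURE gadget of the \eps-WSNE reduction, which introduced the fractional payoff $\lambda = 1 - \frac{2}{2^d+1}$, here no fractional payoff ever appears; hence the codomain of every $R_i$ is $\{0,1\}$ and the constructed game is win-lose.

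Given this, the corollary follows immediately from the preceding theorem. The reduction from \pcircuit is computable in polynomial time, since $k$ is a constant depending only on $\eps$ and the total degree of the game is at most $3k$ (using the total-degree-$3$ bound of Theorem~\ref{thm:pancircuit}); and we have already argued that every $(0.5-\eps)$-NE of the constructed game yields, via the assignment $\valonly$, a solution to the \pcircuit instance. Since \pcircuit is \ppad/-hard, restricting attention to win-lose games does not make finding a $(0.5-\eps)$-NE any easier, and the claimed hardness holds.

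I do not anticipate any real obstacle here; the only point requiring care is to genuinely verify all gadgets, so as to confirm — as was not the case for the \PURE gadget of the \eps-WSNE reduction — that no non-Boolean payoff has slipped into the construction. A brief enumeration of the payoff-tensor entries in each gadget suffices.
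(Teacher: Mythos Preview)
Your proposal is correct and matches the paper's own argument exactly: the paper simply observes that ``the payoff entries in all gadgets are 0 or 1'' and concludes that the hardness result carries over to win-lose games. Your gadget-by-gadget verification is a slightly more detailed version of the same one-line observation.
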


\section{Conclusions}
We have resolved the computational complexity of finding approximate Nash equilibria in two-action graphical games, by providing complete characterizations for both $\eps$-NE and $\eps$-WSNE.
Our results show that finding approximate Nash equilibria in graphical games is much harder when compared to the special case of (constant-degree) polymatrix games: for two-action polymatrix games the tractability threshold for \eps-WSNE is $1/3$ \cite{DFHM22}.

Below we identify two research questions that deserve more research. 
\begin{itemize}
	\item What is the intractability threshold for \eps-NE in graphical games with more than two actions? We have shown that $0.5$ is the threshold for two-action games, and 
	we conjecture that $0.5$ is the correct answer for the multi-action case as well. Hence, we view the main open problem as finding a polynomial-time algorithm that finds a $0.5$-NE in any graphical game. 
	We note that such an algorithm is already known for the special case of polymatrix games~\cite{DFSS17}.
	\item What is the intractability threshold for \eps-NE and \eps-WSNE in polymatrix games? 
	For \eps-NE our understanding is far from complete, even in the two-action case, since there is a substantial gap
	between the $1/48$ lower bound and the $1/3$ upper bound (where the latter actually comes from the  tractability of $1/3$-WSNE) given in \cite{DFHM22}.
	For \eps-WSNE, although the problem is completely resolved for the two-action case, the gap in multi-action polymatrix games is still large, and it seems that improving either the known lower bound of $1/3$, or the trivial upper bound of $1$, would require significantly new ideas.
\end{itemize}

\subsubsection*{Acknowledgements}
The second author was supported by EPSRC grant EP/W014750/1 ``New Techniques for Resolving Boundary Problems in Total Search''.

\bibliographystyle{alphaurl}
\bibliography{refs}

\newcommand{\etalchar}[1]{$^{#1}$}
\begin{thebibliography}{DFHM22}

\bibitem[BLP15]{BLP-polymatrix-trees}
Siddharth Barman, Katrina Ligett, and Georgios Piliouras.
\newblock Approximating {N}ash equilibria in tree polymatrix games.
\newblock In {\em Proc.\ of {SAGT}}, pages 285--296, 2015.

\bibitem[CDT09]{ChenDT09-Nash}
Xi~Chen, Xiaotie Deng, and Shang-Hua Teng.
\newblock Settling the complexity of computing two-player {N}ash equilibria.
\newblock {\em Journal of the ACM}, 56(3):14:1--14:57, 2009.
\newblock \href {https://doi.org/10.1145/1516512.1516516}
  {\path{doi:10.1145/1516512.1516516}}.

\bibitem[DFHM22]{DFHM22}
Argyrios Deligkas, John Fearnley, Alexandros Hollender, and Themistoklis
  Melissourgos.
\newblock Pure-circuit: Strong inapproximability for {PPAD}.
\newblock In {\em Proc.\ of FOCS}, page (to appear), 2022.

\bibitem[DFS17]{DFS-treewidth}
Argyrios Deligkas, John Fearnley, and Rahul Savani.
\newblock Computing constrained approximate equilibria in polymatrix games.
\newblock In {\em Proc.\ of {SAGT}}, pages 93--105, 2017.

\bibitem[DFS20]{DFS20-ptph}
Argyrios Deligkas, John Fearnley, and Rahul Savani.
\newblock Tree polymatrix games are {PPAD}-hard.
\newblock In {\em Proc. of ICALP}, volume 168, pages 38:1--38:14, 2020.
\newblock \href {https://doi.org/10.4230/LIPIcs.ICALP.2020.38}
  {\path{doi:10.4230/LIPIcs.ICALP.2020.38}}.

\bibitem[DFSS17]{DFSS17}
Argyrios Deligkas, John Fearnley, Rahul Savani, and Paul~G. Spirakis.
\newblock Computing approximate {N}ash equilibria in polymatrix games.
\newblock {\em Algorithmica}, 77(2):487--514, 2017.

\bibitem[DGP09]{DaskalakisGP09-Nash}
Constantinos Daskalakis, Paul~W. Goldberg, and Christos~H. Papadimitriou.
\newblock The complexity of computing a {N}ash equilibrium.
\newblock {\em SIAM Journal on Computing}, 39(1):195--259, 2009.
\newblock \href {https://doi.org/10.1137/070699652}
  {\path{doi:10.1137/070699652}}.

\bibitem[EGG06]{EGG}
Edith Elkind, Leslie~Ann Goldberg, and Paul~W. Goldberg.
\newblock {N}ash equilibria in graphical games on trees revisited.
\newblock In {\em Proc.\ of {EC}}, pages 100--109, 2006.

\bibitem[GGJ{\etalchar{+}}10]{galeotti2010network}
Andrea Galeotti, Sanjeev Goyal, Matthew~O Jackson, Fernando Vega-Redondo, and
  Leeat Yariv.
\newblock Network games.
\newblock {\em The review of economic studies}, 77(1):218--244, 2010.

\bibitem[Hoe94]{hoeffding1994probability}
Wassily Hoeffding.
\newblock Probability inequalities for sums of bounded random variables.
\newblock In {\em The collected works of Wassily Hoeffding}, pages 409--426.
  Springer, 1994.

\bibitem[Jac11]{jackson2011overview}
Matthew~O Jackson.
\newblock An overview of social networks and economic applications.
\newblock {\em Handbook of social economics}, 1:511--585, 2011.

\bibitem[JZ15]{jackson2015games}
Matthew~O Jackson and Yves Zenou.
\newblock Games on networks.
\newblock In {\em Handbook of game theory with economic applications},
  volume~4, pages 95--163. Elsevier, 2015.

\bibitem[Kea07]{Kearns07-AGT-graphical-games}
Michael Kearns.
\newblock Graphical games.
\newblock In Noam Nisan, Tim Roughgarden, {\'E}va Tardos, and Vijay~V.
  Vazirani, editors, {\em Algorithmic Game Theory}, pages 159--180. Cambridge
  University Press, 2007.
\newblock \href {https://doi.org/10.1017/cbo9780511800481.009}
  {\path{doi:10.1017/cbo9780511800481.009}}.

\bibitem[KLS01]{KearnsLS01-graphical-games}
Michael Kearns, Michael~L. Littman, and Satinder Singh.
\newblock Graphical models for game theory.
\newblock In {\em Proceedings of the 17th Conference on Uncertainty in
  Artificial Intelligence (UAI)}, pages 253--260, 2001.
\newblock \href {https://doi.org/10.48550/arXiv.1301.2281}
  {\path{doi:10.48550/arXiv.1301.2281}}.

\bibitem[LLD21]{LiuLD21-sparse-winlose-polymatrix}
Zhengyang Liu, Jiawei Li, and Xiaotie Deng.
\newblock On the approximation of {N}ash equilibria in sparse win-lose
  multi-player games.
\newblock In {\em Proceedings of the 35th AAAI Conference on Artificial
  Intelligence (AAAI)}, pages 5557--5565, 2021.
\newblock URL: \url{https://ojs.aaai.org/index.php/AAAI/article/view/16699}.

\bibitem[OI17]{OrtizIrfan-hypergraphical}
Luis~E. Ortiz and Mohammad~Tanvir Irfan.
\newblock Tractable algorithms for approximate {N}ash equilibria in generalized
  graphical games with tree structure.
\newblock In {\em Proc.\ of {AAAI}}, pages 635--641, 2017.

\bibitem[Pap94]{Papadimitriou94-TFNP-subclasses}
Christos~H. Papadimitriou.
\newblock On the complexity of the parity argument and other inefficient proofs
  of existence.
\newblock {\em Journal of Computer and System Sciences}, 48(3):498--532, 1994.
\newblock \href {https://doi.org/10.1016/S0022-0000(05)80063-7}
  {\path{doi:10.1016/S0022-0000(05)80063-7}}.

\bibitem[Rub18]{Rubinstein18-Nash-inapproximability}
Aviad Rubinstein.
\newblock Inapproximability of {N}ash equilibrium.
\newblock {\em SIAM Journal on Computing}, 47(3):917--959, 2018.
\newblock \href {https://doi.org/10.1137/15M1039274}
  {\path{doi:10.1137/15M1039274}}.

\end{thebibliography}

\end{document}